\newtheorem{lemma}{Lemma}[section]
\newtheorem{theorem}{Theorem}[section]
\newtheorem{remark}{Remark}[section]
\newtheorem{proposition}{Proposition}[section]
\newtheorem{property}{Property}[section]
\begin{document}
\begin{center}
\textbf{\LARGE{A Sequence of Inequalities among Difference of Symmetric Divergence Measures}}
\end{center}

\smallskip
\begin{center}
\textbf{\large{Inder Jeet Taneja}}\\
Departamento de Matem\'{a}tica\\
Universidade Federal de Santa Catarina\\
88.040-900 Florian\'{o}polis, SC, Brazil.\\
\textit{e-mail: taneja@mtm.ufsc.br\\
http://www.mtm.ufsc.br/$\sim $taneja}
\end{center}

\begin{abstract}
In this paper we have considered two one parametric generalizations. These two generalizations have in particular the well known measures such as: \textit{J-divergence}, \textit{Jensen-Shannon divergence} and  \textit{ arithmetic-geometric mean divergence. }These three measures are with logarithmic expressions. Also, we have particular cases the measures such as: \textit{Hellinger discrimination}, \textit{symmetric }$\chi ^2 - $\textit{divergence}, and \textit{triangular discrimination}. These three measures are also well-known in the literature of statistics, and are without
logarithmic expressions. Still, we have one more non logarithmic measure as particular case calling it \textit{d-divergence}. These seven measures bear an interesting inequality. Based on this inequality, we have considered different \textit{difference of divergence measures} and established a sequence of inequalities among themselves.
\end{abstract}

\smallskip
\textbf{Key words:} \textit{J-divergence; Jensen-Shannon divergence; Arithmetic-Geometric divergence; Triangular discrimination; Symmetric chi-square divergence; Hellinger's discrimination, d-divergence; Csisz\'{a}r's f-divergence; Information inequalities.}

\smallskip
\textbf{AMS Classification:} 94A17; 62B10.

\section{Introduction}

Let
\[
\Gamma _n = \left\{ {P = (p_1 ,p_2 ,...,p_n )\left| {p_i > 0,\sum\limits_{i
= 1}^n {p_i = 1} } \right.} \right\}, \quad n \ge 2,
\]

\noindent
be the set of all complete finite discrete probability distributions. For
all $P,Q \in \Gamma _n $, the following measures are well known in the
literature on information theory and statistics:

\bigskip
\noindent
\textbf{$\bullet$ Hellinger Discrimination}

\[
h(P\vert \vert Q) = 1 - B(P\vert \vert Q) = \frac{1}{2}\sum\limits_{i = 1}^n
{(\sqrt {p_i } - \sqrt {q_i } )^2} ,
\]

\noindent where

\[
B(P\vert \vert Q) = \sqrt {p_i q_i } ,
\]

\noindent
is the well-known Bhattacharyya \textit{coefficient}.

\bigskip
\noindent
\textbf{$\bullet$ Triangular Discrimination}

\[
\Delta (P\vert \vert Q) = 2\left[ {1 - W(P\vert \vert Q)} \right] =
\sum\limits_{i = 1}^n {\frac{(p_i - q_i )^2}{p_i + q_i }} ,
\]

\noindent where

\[
W(P\vert \vert Q) = \sum\limits_{i = 1}^n {\frac{2p_i q_i }{p_i + q_i }} ,
\]

\noindent
is the well-known \textit{harmonic mean divergence}.

\bigskip
\noindent
\textbf{$\bullet$ Symmetric Chi-square Divergence}

\[
\Psi (P\vert \vert Q) = \chi ^2(P\vert \vert Q) + \chi ^2(Q\vert \vert P) =
\sum\limits_{i = 1}^n {\frac{(p_i - q_i )^2(p_i + q_i )}{p_i q_i }} ,
\]

\noindent where

\[
\chi ^2(P\vert \vert Q) = \sum\limits_{i = 1}^n {\frac{(p_i - q_i )^2}{q_i
}} = \sum\limits_{i = 1}^n {\frac{p_i^2 }{q_i } - 1} ,
\]

\noindent
is the well-known $\chi ^2 - $\textit{divergence} .

\bigskip
\noindent
\textbf{$\bullet$ J-Divergence}

\[
J(P\vert \vert Q) = \sum\limits_{i = 1}^n {(p_i - q_i )\ln (\frac{p_i }{q_i
})} .
\]

\bigskip
\noindent
\textbf{$\bullet$ Jensen-Shannon Divergence}

\[
I(P\vert \vert Q) = \frac{1}{2}\left[ {\sum\limits_{i = 1}^n {p_i \ln \left(
{\frac{2p_i }{p_i + q_i }} \right) + } \sum\limits_{i = 1}^n {q_i \ln \left(
{\frac{2q_i }{p_i + q_i }} \right)} } \right].
\]

\bigskip
\noindent
\textbf{$\bullet$ Arithmetic-Geometric Mean Divergence}

\[
T(P\vert \vert Q) = \sum\limits_{i = 1}^n {\left( {\frac{p_i + q_i }{2}}
\right)\ln \left( {\frac{p_i + q_i }{2\sqrt {p_i q_i } }} \right)} .
\]

Originally, the \textit{J-divergence} is due to Jeffreys \cite{jef}. The measure \textit{Jensen-Shannon divergence} is due to Sibson \cite{sib}. Later, Burbea and Rao \cite{bur} studied it extensively. The \textit{arithmetic and geometric mean divergence} is due to Taneja \cite{tan2}. Detailed study of these measures can be seen in Taneja \cite{tan2, tan3, tan4}. We call the above six measures \textit{symmetric divergence measures}, since they are symmetric with respect to the probability distributions $P$ and $Q$. The author \cite{tan3, tan4} obtained an inequality among these six symmetric divergence measures given by
\begin{equation}
\label{eq1}
\frac{1}{4}\Delta (P\vert \vert Q) \le I(P\vert \vert Q) \le h(P\vert \vert
Q)
 \le \frac{1}{8}J(P\vert \vert Q) \le T(P\vert \vert Q) \le \frac{1}{16}\Psi
(P\vert \vert Q).
\end{equation}

By defining the nonnegative differences among the divergence measures
appearing in (\ref{eq1}), the author \cite{tan4} improved the above result
(\ref{eq1}) obtaining the following sequence of inequalities:
\begin{align}
\label{eq2}
 D_{I\Delta } (P\vert \vert Q) & \le \frac{2}{3} D_{h\Delta } (P\vert \vert Q)
\le \frac{1}{2}D_{J\Delta } (P\vert \vert Q) \le \frac{1}{3}D_{T\Delta }
(P\vert \vert Q) \le D_{TJ} (P\vert \vert Q) \le\notag\\
 \le &\frac{2}{3}D_{Th} (P\vert \vert Q) \le 2D_{Jh} (P\vert \vert Q) \le
\frac{1}{6}D_{\Psi \Delta } (P\vert \vert Q) \le \frac{1}{5}D_{\Psi I}
(P\vert \vert Q) \le\notag\\
 &\le \frac{2}{9}D_{\Psi h} (P\vert \vert Q) \le \frac{1}{4}D_{\Psi J}
(P\vert \vert Q) \le \frac{1}{3}D_{\Psi T} (P\vert \vert Q),
\end{align}

\noindent
where, for example, $D_{\Psi T} (P\vert \vert Q) = \frac{1}{16}\Psi (P\vert
\vert Q) - T(P\vert \vert Q)$, and similarly others. Still, we have
\begin{equation}
\label{eq3}
\frac{2}{3}D_{h\Delta } (P\vert \vert Q) \le 2D_{hI} (P\vert \vert Q) \le
D_{TJ} (P\vert \vert Q).
\end{equation}

The proof of the inequalities (\ref{eq1})-(\ref{eq3}) is based on the following two lemmas:

\begin{lemma} If the function $f:[0,\infty ) \to {\rm R}$ is convex and normalized, i.e., $f(1) = 0$, then the \textit{f-divergence}, $C_f (P\vert \vert Q)$ given by
\begin{equation}
\label{eq4}
C_f (P\vert \vert Q) =
\sum\limits_{i = 1}^n {q_i f\left( {\frac{p_i }{q_i }} \right)} ,
\end{equation}

\noindent
is nonnegative and convex in the pair of probability distribution $(P,Q) \in
\Gamma _n \times \Gamma _n $.
\end{lemma}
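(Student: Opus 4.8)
The plan is to prove the two assertions separately, each being a direct consequence of the convexity of $f$ together with Jensen's inequality.

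For nonnegativity, I would apply Jensen's inequality to the convex function $f$ using the weights $q_i$ (which sum to one) and the points $p_i/q_i$, obtaining
\[
C_f (P\vert \vert Q) = \sum\limits_{i=1}^n q_i\, f\!\left(\frac{p_i}{q_i}\right) \ge f\!\left(\sum\limits_{i=1}^n q_i \cdot \frac{p_i}{q_i}\right) = f\!\left(\sum\limits_{i=1}^n p_i\right) = f(1) = 0 .
\]

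For the joint convexity, the key is to recognize that each summand $q_i f(p_i/q_i)$ has the form $g(p_i,q_i)$, where $g(x,y) := y\, f(x/y)$ is the perspective of $f$ on $(0,\infty)\times(0,\infty)$, and that the perspective of a convex function is jointly convex. I would establish this directly: given points $(x_1,y_1)$ and $(x_2,y_2)$ with positive entries and $\lambda\in[0,1]$, set $x=\lambda x_1+(1-\lambda)x_2$, $y=\lambda y_1+(1-\lambda)y_2$, and introduce the weight $\mu = \lambda y_1/y$, so that $1-\mu=(1-\lambda)y_2/y$ and, by a short computation, $x/y = \mu\,(x_1/y_1)+(1-\mu)\,(x_2/y_2)$. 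Applying convexity of $f$ to this convex combination and multiplying through by $y$ gives
\[
g(x,y) \le \lambda\, g(x_1,y_1) + (1-\lambda)\, g(x_2,y_2).
\]
Summing this inequality over $i=1,\dots,n$ with $(x_1,y_1)=(p_i,q_i)$ taken from a first pair $(P,Q)$ and $(x_2,y_2)=(p_i',q_i')$ from a second pair $(P',Q')$ yields $C_f\big(\lambda(P,Q)+(1-\lambda)(P',Q')\big)\le \lambda\, C_f(P\vert\vert Q)+(1-\lambda)\, C_f(P'\vert\vert Q')$, which is precisely the claimed convexity on $\Gamma_n\times\Gamma_n$.

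The only delicate point is the bookkeeping in the perspective argument: one must check that $\mu$ indeed lies in $[0,1]$ — which holds because all the $y_j$ and $\lambda$ are nonnegative and $y>0$ — and that the identity $x/y = \mu(x_1/y_1)+(1-\mu)(x_2/y_2)$ is algebraically exact. Nothing beyond convexity of $f$ is needed (no smoothness), and no boundary difficulty with $y=0$ arises since all components of distributions in $\Gamma_n$ are strictly positive, so the general case is handled at once.
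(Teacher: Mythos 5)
Your proof is correct. The paper itself does not prove this lemma --- it simply attributes it to Csisz\'{a}r \cite{csi} --- and the argument you give (Jensen's inequality with weights $q_i$ at the points $p_i/q_i$ for nonnegativity, and joint convexity of the perspective $g(x,y)=y\,f(x/y)$ via the reweighting $\mu=\lambda y_1/y$ for convexity) is precisely the standard proof from that classical source, carried out with the bookkeeping done correctly; your observation that strict positivity of the components of distributions in $\Gamma_n$ avoids any boundary issue at $y=0$ is also apt.
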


\begin{lemma} Let $f_1 ,f_2 :I \subset {\rm R}_ + \to {\rm R}$ two
generating mappings are normalized, i.e., $f_1 (1) = f_2 (1) = 0$ and
satisfy the assumptions:

(i) $f_1 $ and $f_2 $ are twice differentiable on $(a,b)$;

(ii) there exists the real constants $m,M$such that $m < M$and

\[
m \le \frac{f_1 ^{\prime \prime }(x)}{f_2 ^{\prime \prime }(x)} \le M,
\quad f_2 ^{\prime \prime }(x) > 0, \quad \forall x \in (a,b),
\]

\noindent then we have the inequalities:
\begin{equation}
\label{eq5}
m\mbox{ }C_{f_2 } (P\vert \vert Q) \le C_{f_1 } (P\vert \vert Q) \le M\mbox{
}C_{f_2 } (P\vert \vert Q).
\end{equation}
\end{lemma}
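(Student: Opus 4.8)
The plan is to reduce Lemma \ref{eq5}'s statement to Lemma \ref{eq4}'s (Csisz\'ar's nonnegativity lemma) by manufacturing two auxiliary generating functions out of $f_1$ and $f_2$ whose convexity precisely encodes the two-sided bound on the ratio $f_1''/f_2''$.

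First I would set $g := f_1 - m f_2$ and $h := M f_2 - f_1$. Both are twice differentiable on $(a,b)$, and by normalization $g(1) = f_1(1) - m f_2(1) = 0$ and $h(1) = M f_2(1) - f_1(1) = 0$, so both are normalized. Next I would check convexity: since $f_2'' > 0$ on $(a,b)$,
\[
g''(x) = f_1''(x) - m f_2''(x) = f_2''(x)\left(\frac{f_1''(x)}{f_2''(x)} - m\right) \ge 0
\]
by the lower bound $m \le f_1''/f_2''$, and likewise
\[
h''(x) = M f_2''(x) - f_1''(x) = f_2''(x)\left(M - \frac{f_1''(x)}{f_2''(x)}\right) \ge 0
\]
by the upper bound. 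Hence $g$ and $h$ are convex and normalized.

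Then I would invoke Lemma \ref{eq4}'s lemma applied to $g$ and to $h$, obtaining $C_g(P\vert\vert Q) \ge 0$ and $C_h(P\vert\vert Q) \ge 0$ for all $(P,Q) \in \Gamma_n \times \Gamma_n$. The last ingredient is the linearity of $f \mapsto C_f(P\vert\vert Q)$, which is immediate from the defining formula (\ref{eq4}): $C_{\alpha f + \beta \tilde f}(P\vert\vert Q) = \alpha C_f(P\vert\vert Q) + \beta C_{\tilde f}(P\vert\vert Q)$. Applying this to $g$ gives $C_{f_1}(P\vert\vert Q) - m\, C_{f_2}(P\vert\vert Q) = C_g(P\vert\vert Q) \ge 0$, the left inequality of (\ref{eq5}); applying it to $h$ gives $M\, C_{f_2}(P\vert\vert Q) - C_{f_1}(P\vert\vert Q) = C_h(P\vert\vert Q) \ge 0$, the right inequality.

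The only point requiring a little care — and the place where the statement is slightly informal — is the domain: Lemma \ref{eq4}'s lemma is phrased for functions convex on $[0,\infty)$, whereas here $f_1,f_2$ are only assumed twice differentiable on an open interval $(a,b) \subset {\rm R}_+$. In the intended applications the ratios $p_i/q_i$ occurring in $C_f(P\vert\vert Q)$ lie in the range where the convexity of $g$ and $h$ has been established, so the nonnegativity conclusion still applies to those particular sums; I would flag this explicitly rather than treat it as automatic. Everything else is just elementary second-derivative algebra together with the linearity of $C_f$ in $f$, so I do not expect any genuine obstacle.
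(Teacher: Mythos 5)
Your proof is correct and is the standard argument: the paper itself does not prove this lemma (it only cites the author's earlier work \cite{tan4}), and the proof given there proceeds exactly as you do, via the auxiliary normalized convex functions $f_1 - m f_2$ and $M f_2 - f_1$, Csisz\'ar's nonnegativity (Lemma 1.1), and the linearity of $f \mapsto C_f(P\vert\vert Q)$. Your remark about the mismatch between convexity on $(a,b)$ and the statement of Lemma 1.1 on $[0,\infty)$ is a fair observation, and is harmless here since every application in the paper takes $(a,b)=(0,\infty)$.
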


 The measure (\ref{eq5}) is the well-known \textit{Csisz\'{a}r's f-divergence}. The Lemma 1.1 is due to Csisz\'{a}r \cite{csi} and the Lemma 1.2 is due to author \cite{tan4}. Some applications of Lemma 1.1 can be seen in Taneja and Kumar \cite{tak}.

\section{Generalized Symmetric Divergence Measures}

Let us consider the measure
\begin{equation}
\label{eq6}
\zeta _s (P\vert \vert Q) = \begin{cases}
 {J_s (P\vert \vert Q) = \left[ {s(s - 1)} \right]^{ - 1}\left[
{\sum\limits_{i = 1}^n {\left( {p_i^s q_i^{1 - s} + p_i^{1 - s} q_i^s }
\right) - 2} } \right],} & {s \ne 0,1} \\
 {J(P\vert \vert Q) = \sum\limits_{i = 1}^n {\left( {p_i - q_i } \right)\ln
\left( {\frac{p_i }{q_i }} \right),} } & {s = 0,1} \\
\end{cases}
\end{equation}

\noindent
for all $P,Q \in \Gamma _n $

\bigskip
The measure (\ref{eq6}) is \textit{generalized J-divergence} or \textit{J-divergence of type s} and is extensively studied in Taneja \cite{tan2, tan4}. The expression (\ref{eq6}) admits the following particular cases:

\smallskip
\noindent
(i) $\zeta _{ - 1} (P\vert \vert Q) = \zeta _2 (P\vert \vert Q) = \frac{1}{2}\Psi (P\vert \vert Q)$,

\noindent
(ii) $\zeta _0 (P\vert \vert Q) = \zeta _1 (P\vert \vert Q) = J(P\vert \vert Q)$,

\noindent
(iii) $\zeta _{1 / 2} (P\vert \vert Q) = 8\,h(P\vert \vert Q)$,

\smallskip
\noindent
where $\Psi (P\vert \vert Q)$, $J(P\vert \vert Q)$and $h(P\vert \vert Q)$are
as given in section 1.

\bigskip
Let us consider now the another measure
\begin{equation}
\label{eq7}
\xi _s (P\vert \vert Q) =
\begin{cases}
 {IT_s (P\vert \vert Q) = \left[ {s(s - 1)} \right]^{ - 1}\left[
{\sum\limits_{i = 1}^n {\left( {\frac{p_i^s + q_i^s }{2}} \right)\left(
{\frac{p_i + q_i }{2}} \right)} ^{1 - s} - 1} \right],} & {s \ne 0,1} \\
 {I(P\vert \vert Q) = \frac{1}{2}\left[ {\sum\limits_{i = 1}^n {p_i \ln
\left( {\frac{2p_i }{p_i + q_i }} \right) + \sum\limits_{i = 1}^n {q_i \ln
\left( {\frac{2q_i }{p_i + q_i }} \right)} } } \right],} & {s = 1} \\
 {T(P\vert \vert Q) = \sum\limits_{i = 1}^n {\left( {\frac{p_i + q_i }{2}}
\right)\ln \left( {\frac{p_i + q_i }{2\sqrt {p_i q_i } }} \right)} ,} & {s =
0} \\
\end{cases}
\end{equation}

\noindent
for all $P,Q \in \Gamma _n $

\bigskip
The measure (\ref{eq7}) is new in the literature and is studied for the first time by Taneja \cite{tan3}. It is called \textit{generalized arithmetic and geometric mean divergence measure.} . The measure (\ref{eq7}) admits the following particular cases:

\smallskip
\noindent
(i) $\xi _{ - 1} (P\vert \vert Q) = \frac{1}{4}\Delta (P\vert \vert Q)$.

\noindent
(ii) $\xi _1 (P\vert \vert Q) = I(P\vert \vert Q)$.

\noindent
(iii) $\xi _{1 / 2} (P\vert \vert Q) = 4\;d(P\vert \vert Q)$.

\noindent
(iii) $\xi _0 (P\vert \vert Q) = T(P\vert \vert Q)$.

\noindent
(iv) $\xi _2 (P\vert \vert Q) = \frac{1}{16}\Psi (P\vert \vert Q)$.

\smallskip
\noindent
where $\Delta (P\vert \vert Q)$, $I(P\vert \vert Q)$, $T(P\vert \vert Q)$
and $\Psi (P\vert \vert Q)$ are as given in section 1. The measure $d(P\vert
\vert Q)$ appearing in particular case (iii) is given by
\begin{equation}
\label{eq8}
d(P\vert \vert Q) = 1 - \sum\limits_{i = 1}^n {\left( {\frac{\sqrt {p_i } +
\sqrt {q_i } }{2}} \right)} \left( {\sqrt {\frac{p_i + q_i }{2}} } \right).
\end{equation}

For simplicity, we call the measure (\ref{eq8}) as $d - $\textit{divergence}. Thus we observe that when we take $s = \, - 1,\,0,\,\textstyle{1 \over 2},\,1$ and $2$ in (\ref{eq6}) and (\ref{eq7}), we have \textit{seven} particular cases. The measure $\Psi (P\vert \vert Q)$ appears as a particular case in both the measures (\ref{eq6}) and (\ref{eq7}). An inequality among these seven measures is given by
\begin{equation}
\label{eq9}
\frac{1}{4}\Delta (P\vert \vert Q) \le I(P\vert \vert Q) \le h(P\vert \vert
Q) \le 4\,d(P\vert \vert Q)
 \le \frac{1}{8}J(P\vert \vert Q) \le T(P\vert \vert Q) \le \frac{1}{16}\Psi
(P\vert \vert Q).
\end{equation}

Results appearing in (\ref{eq2}) and (\ref{eq3}) are based on the inequalities given in (\ref{eq1}). In this paper our aim is improve the results given in (\ref{eq2})-(\ref{eq3}) and obtain a new sequence of inequalities based on the expression (\ref{eq9}).

\section{Difference of Divergence Measures and their Convexity}

The inequality (\ref{eq9}) admits many nonnegative difference than the one given
(\ref{eq2}) and (\ref{eq3}), but we shall consider only those having the expression
$d(P\vert \vert Q)$. These \textit{nonnegative} differences are given by
\begin{align}
D_{\Psi d} (P\vert \vert Q) &= \frac{1}{16}\Psi (P\vert \vert Q) - 4d(P\vert \vert Q),\notag\\
D_{Td} (P\vert \vert Q) &= T(P\vert \vert Q) - 4d(P\vert \vert Q),\notag\\
D_{Jd} (P\vert \vert Q)& = \frac{1}{8}J(P\vert \vert Q) - 4d(P\vert \vert Q), \notag\\
D_{dh} (P\vert \vert Q) &= 4d(P\vert \vert Q) - h(P\vert \vert Q),\notag\\
D_{dI} (P\vert \vert Q)& = 4d(P\vert \vert Q) - I(P\vert \vert Q), \notag\\
\intertext{and}
D_{d\Delta } (P\vert \vert Q) &= 4d(P\vert \vert Q) - \frac{1}{4}\Delta (P\vert \vert Q).\notag
\end{align}

Here below we shall prove the convexity of the above \textit{six }measures. The proof is based on the Lemma 1.1. Initially, we shall give the convexity of the measures (\ref{eq6}) and (\ref{eq7}).

\begin{property} (i) The measure $\zeta _s (P\vert \vert Q)$ is \textit{nonnegative} and \textit{convex} in the pair of probability distributions $(P,Q) \in \Gamma _n \times \Gamma _n $ for all $s \in ( - \infty ,\infty )$.

\noindent
(ii) The measure $\xi _s (P\vert \vert Q)$ is \textit{nonnegative} and \textit{convex} in the pair of probability distributions $(P,Q) \in \Gamma _n \times \Gamma _n $ for all $s
\in ( - \infty ,\infty )$.
\end{property}

\begin{proof} (i) For all $x > 0$ and $s \in ( - \infty ,\infty )$, let us consider
\[
\phi _s (x) = \begin{cases}
 {\left[ {s(s - 1)} \right]^{ - 1}\left[ {x^s + x^{1 - s} - (1 + x)}
\right],} & {s \ne 0,1,} \\
 {(x - 1)\ln x,} & {s = 0,1,} \\
\end{cases}
\]

\noindent
in (\ref{eq4}), then we have $C_f (P\vert \vert Q)=\zeta _s \left( {P\vert
\vert Q} \right)$, where $\zeta _s \left( {P\vert \vert Q} \right)$ is given
by (\ref{eq6}).

\smallskip
Moreover,
\[
\phi _s ^\prime (x) = \begin{cases}
 {\left[ {s(s - 1)} \right]^{ - 1}\left[ {s(x^{s - 1} + x^{ - s}) + x^{ - s}
- 1} \right],} & {s \ne 0,1} \\
 {1 - x^{ - 1} + \ln x,} & {s = 0,1} \\
\end{cases},
\]

\noindent and
\begin{equation}
\label{eq10}
\phi _s ^{\prime \prime }(x) = x^{s - 2} + x^{ - s - 1}.
\end{equation}

Thus we have $\phi _s ^{\prime \prime }(x) > 0$ for all $x > 0$, and hence, $\phi _s (x)$ is convex for all $x > 0$. Also, we have $\phi _s (1) = 0$. In view of this we can say that the \textit{J-divergence of type s }is \textit{nonnegative} and \textit{convex} in the pair of probability distributions $(P,Q) \in \Gamma _n \times \Gamma _n $ for any $s \in ( - \infty ,\infty )$.

\smallskip
For all $x > 0$ and $s \in ( - \infty ,\infty )$, let us consider
\[
\psi _s (x) = \begin{cases}
 {\left[ {s(s - 1)} \right]^{ - 1}\left[ {\left( {\frac{x^{1 - s} + 1}{2}}
\right)\left( {\frac{x + 1}{2}} \right)^s - \left( {\frac{x + 1}{2}}
\right)} \right],} & {s \ne 0,1,} \\
 {\frac{x}{2}\ln x - \left( {\frac{x + 1}{2}} \right)\ln \left( {\frac{x +
1}{2}} \right),} & {s = 0,} \\
 {\left( {\frac{x + 1}{2}} \right)\ln \left( {\frac{x + 1}{2\sqrt x }}
\right),} & {s = 1,} \\
\end{cases}
\]

\noindent
in (\ref{eq4}), then we have $C_f (P\vert \vert Q) = \xi _s (P\vert \vert Q)$,
where $\xi _s (P\vert \vert Q)$ is as given by (\ref{eq7}).

\smallskip
Moreover,
\[
\psi _s ^\prime (x) = \begin{cases}
 {(s - 1)^{ - 1}\left[ {\frac{1}{s}\left[ {\left( {\frac{x + 1}{2x}}
\right)^s - 1} \right] - \frac{x^{ - s} - 1}{4}\left( {\frac{x + 1}{2}}
\right)^{s - 1}} \right],} & {s \ne 0,1,} \\
 { - \frac{1}{2}\ln \left( {\frac{x + 1}{2x}} \right),} & {s = 0,} \\
 {1 - x^{ - 1} - \ln x - 2\ln \left( {\frac{2}{x + 1}} \right),} & {s = 1,}
\\
\end{cases}
\]

\noindent and
\begin{equation}
\label{eq11}
\psi _s ^{\prime \prime }(x) = \left( {\frac{x^{ - s - 1} + 1}{8}}
\right)\left( {\frac{x + 1}{2}} \right)^{s - 2}.
\end{equation}

Thus we have $\psi _s ^{\prime \prime }(x) > 0$ for all $x > 0$, and hence, $\psi _s (x)$ is convex for all $x > 0$. Also, we have $\psi _s (1) = 0$. In view of this we can say that\textit{ AG and JS -- divergence of type }$s$ is \textit{nonnegative} and \textit{convex} in the pair of probability distributions $(P,Q) \in \Gamma _n \times \Gamma _n $ for any $s \in ( - \infty ,\infty )$.
\end{proof}

\begin{property} (i) The measure $\zeta _s (P\vert \vert Q)$ is monotonically increasing in $s$ for all $s \ge \frac{1}{2}$ and decreasing in $s \le \frac{1}{2}$.

\noindent
(ii)The measure $\xi _s (P\vert \vert Q)$ is monotonically increasing in $s$ for all $s \ge - 1$.
\end{property}

\smallskip
The proof of the Properties 2.1 and 2.2 can be seen in Taneja \cite{tan3}. Here, we have repeated the proof of (\ref{eq6}), since we need the expressions given in (\ref{eq10}) and (\ref{eq11}).

\begin{lemma} The above six difference of divergence measures are \textit{convex} in the pair of probability distributions $(P,Q) \in \Gamma _n \times \Gamma _n $.
\end{lemma}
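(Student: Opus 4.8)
The plan is to realize each of the six difference measures as a Csisz\'ar $f$-divergence $C_f(P\vert\vert Q)$ with an explicitly computable generating function $f$, and then invoke Lemma 1.1: it suffices to check that each such $f$ is convex on $(0,\infty)$ and normalized by $f(1)=0$. Since all seven building-block divergences ($\tfrac14\Delta$, $I$, $h$, $4d$, $\tfrac18 J$, $T$, $\tfrac1{16}\Psi$) are themselves $f$-divergences — the first six via the functions $\phi_{-1},\psi_1,\phi_{1/2}$ (up to the factor $8$), $\psi_{1/2}$ (up to the factor $4$), $\phi_0$ (up to $\tfrac18$), $\psi_0$, $\psi_2$ (up to $\tfrac1{16}$) from the proof of Property 2.1 — and $C_{f_1}-C_{f_2}=C_{f_1-f_2}$ whenever the domains match, each of $D_{\Psi d},D_{Td},D_{Jd},D_{dh},D_{dI},D_{d\Delta}$ equals $C_{g}(P\vert\vert Q)$ for a difference $g$ of two such generating functions. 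The point $f(1)=0$ is automatic because both pieces vanish at $1$. So the whole statement reduces to six second-derivative sign checks.

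Concretely, I would first record the generating function $g_d$ for $4d$: from $\psi_{1/2}(x) = 2\bigl[(\tfrac{x+1}{2})^{1/2} - \tfrac14(x^{-1/2}-1)(\tfrac{x+1}{2})^{-1/2}\bigr]$-type bookkeeping one gets, using \eqref{eq11} at $s=\tfrac12$, that $\psi_{1/2}''(x) = \tfrac18(x^{-3/2}+1)(\tfrac{x+1}{2})^{-3/2}$, hence $(4d)$ is generated by a function with second derivative $\tfrac12\,\psi_{1/2}''(x) = \tfrac1{16}(x^{-3/2}+1)(\tfrac{x+1}{2})^{-3/2}$. Then for each difference I would write the second derivative of the generating function as the difference of two known nonnegative expressions drawn from \eqref{eq10}–\eqref{eq11} (with the correct normalizing constants) and show it is still $\ge 0$ on $(0,\infty)$. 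For instance, for $D_{d\Delta}$ one needs $\tfrac1{16}(x^{-3/2}+1)(\tfrac{x+1}{2})^{-3/2} - \tfrac14\cdot 2\cdot(x^{-1}+1)(\tfrac{x+1}{2})^{-3}\ge 0$, i.e.\ after clearing the common factor $(\tfrac{x+1}{2})^{-3/2}$, an inequality of the form $(x^{-3/2}+1) \ge c\,(x^{-1}+1)(\tfrac{x+1}{2})^{-3/2}$, which is exactly the scalar inequality underlying $4d \ge \tfrac14\Delta$ in \eqref{eq9}; substituting $x=p_i/q_i$ and summing against $q_i$ recovers the divergence inequality, and convexity of the difference follows from this same pointwise bound on the second derivative.

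The main obstacle is thus not conceptual but computational: verifying that each of the six second-derivative differences is pointwise nonnegative on $(0,\infty)$. These reduce, after factoring out the dominant power of $\tfrac{x+1}{2}$, to one-variable inequalities that are equivalent (term by term) to the scalar inequalities already encoded in the chain \eqref{eq9}; the cleanest route is to substitute $u=\sqrt{x}$ to rationalize the half-integer powers and then confirm the resulting polynomial/rational inequality, typically by checking it holds with equality at $x=1$ and has the right monotonicity on either side, or by an AM–GM / power-mean comparison. I would handle the three ``upper'' differences $D_{\Psi d},D_{Td},D_{Jd}$ and the three ``lower'' differences $D_{dh},D_{dI},D_{d\Delta}$ in two groups, since within each group the generating functions share the same structural form and the required inequalities are minor variants of one another. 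Once all six second-derivative signs are confirmed, Lemma 1.1 delivers nonnegativity and convexity of each $D$ simultaneously, completing the proof.
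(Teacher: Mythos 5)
Your strategy coincides with the paper's: each of the six differences is written as a Csisz\'ar $f$-divergence whose generating function is a difference of the $\phi_s,\psi_s$ from Property 2.1 (with suitable constants), $f(1)=0$ is automatic, and Lemma 1.1 reduces everything to six pointwise sign checks on $f''$ built from \eqref{eq10}--\eqref{eq11}. Two things need fixing. First, your normalizations slip: since $\xi_{1/2}=4d=C_{\psi_{1/2}}$, the generating function of $4d$ has second derivative $\psi_{1/2}''(x)=\tfrac18\bigl(x^{-3/2}+1\bigr)\bigl(\tfrac{x+1}{2}\bigr)^{-3/2}$ itself, not $\tfrac12\psi_{1/2}''$; and the subtrahend for $\tfrac14\Delta$ is $\psi_{-1}''(x)=\tfrac{x^{0}+1}{8}\bigl(\tfrac{x+1}{2}\bigr)^{-3}=\tfrac14\bigl(\tfrac{x+1}{2}\bigr)^{-3}$, not $(x^{-1}+1)$ times your prefactor. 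As a result your displayed inequality for $D_{d\Delta}$ is false as written (at $x=1$ it reads $\tfrac18\ge 1$), though the corrected version is exactly the paper's \eqref{eq19}. Second, the six sign verifications are the entire content of the lemma, and you defer them to ``AM--GM / power-mean comparisons''; be aware that the chain \eqref{eq9} is a consequence of these pointwise second-derivative inequalities, not conversely, so you cannot cite \eqref{eq9} in their place and must prove them directly. For what it is worth, the paper proves $D_{dh}$, $D_{dI}$, $D_{d\Delta}$ analytically (monotonicity of power means and the inequalities \eqref{eq17}--\eqref{eq22}) but justifies the remaining three cases $D_{\Psi d}$, $D_{Td}$, $D_{Jd}$ only by plotting the functions $m_1,m_2,m_3$, so carrying out your proposed rationalization $u=\sqrt{x}$ for those would actually strengthen the published argument.
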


\begin{proof}
We shall prove the above lemma in each case separately.

\bigskip
\noindent
(i) \textbf{For }$D_{\Psi d} (P\vert \vert Q)$\textbf{: }We can write
\[
D_{\Psi d} (P\vert \vert Q) = \frac{1}{16}\Psi (P\vert \vert Q) - 4d(P\vert
\vert Q)
 = \sum\limits_{i = 1}^n {q_i f_{\Psi d} \left( {\frac{p_i }{q_i }} \right)} ,
\]

\noindent where
\[
f_{\Psi d} (x) = \frac{1}{16}f_\Psi \left( x \right) - 4f_d \left( x
\right), \quad x > 0.
\]

We have
\begin{align}
\label{eq12}
{f}''_{\Psi d} (x) & = \frac{1}{16}{f}''_\Psi \left( x \right) - 4{f}''_d
\left( x \right)\notag\\
& = \frac{x^3 + 1}{8x^3} - \frac{x^{3 / 2} + 1}{2\sqrt 2 \;x^{3 / 2}(x +
1)^{3 / 2}}\notag\\
& = \frac{(x^3 + 1)(x + 1)\sqrt {2x + 2} - 4x^{3 / 2}(x^{3 / 2} +
1)}{8x^3\sqrt {2x + 2} }\notag\\
& = \frac{1}{x^3(x + 1)\sqrt {2x + 2} }\times m_1 (x),
\end{align}

\noindent
where $m_1 (x),\;x > 0$ is given by
\[
m_1 (x) = \left( {\frac{x^3 + 1}{2}} \right)\left( {\frac{x + 1}{2}}
\right)^{3 / 2} - x^{3 / 2}\left( {\frac{x^{3 / 2} + 1}{2}} \right).
\]

The measures ${f}''_\Psi \left( x \right)$ and ${f}''_d \left( x \right)$ appearing in (\ref{eq12}) are obtained from (\ref{eq11}) by taking $s = 2$ and $s = 1$ respectively. The graph of the function $m_1 (x),\;x > 0$ is given by

\begin{figure}[htbp]
\centerline{\includegraphics[width=2.00in,height=2.00in]{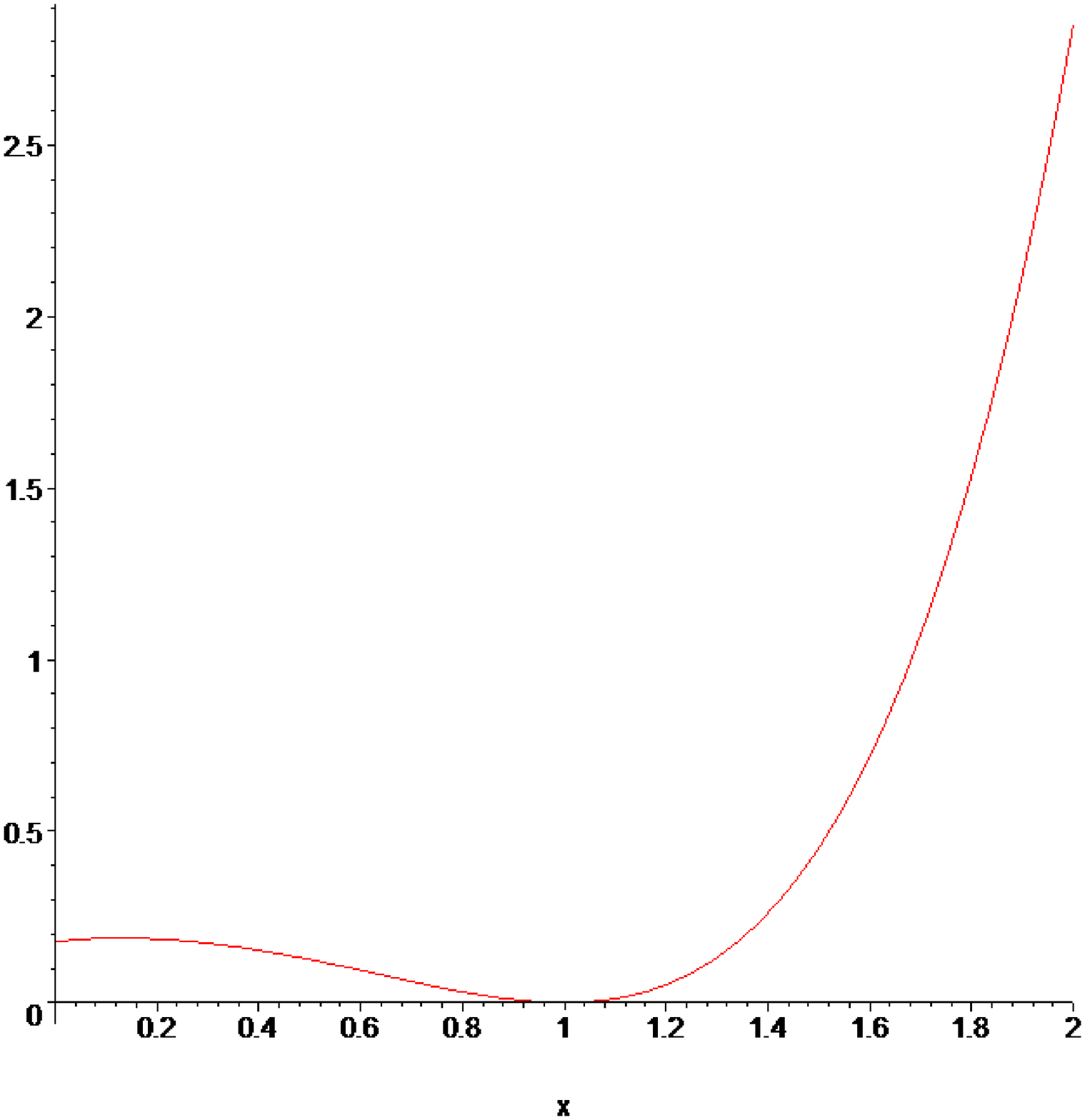}}
\label{fig1}
\end{figure}

From the above graph we observe that the function $m_1 (x) \ge 0,\;\forall x > 0$. This allows us to conclude that ${f}''_{\Psi d} (x) \ge 0,\;x > 0$, and hence, $f_{\Psi d} (x)$ is convex for all $x > 0$. Also, $f_{\Psi d} (1) = 0$. Thus by the application of the Lemma 1.1, we conclude that measure
$D_{\Psi d} (P\vert \vert Q)$is nonnegative and convex for all $(P,Q) \in \Gamma _n \times \Gamma _n $.

\bigskip
\noindent
(ii) \textbf{For }$D_{Td} (P\vert \vert Q)$\textbf{: }We can write
\[
D_{Td} (P\vert \vert Q) = T(P\vert \vert Q) - 4d(P\vert \vert Q)
 = \sum\limits_{i = 1}^n {q_i f_{Td} \left( {\frac{p_i }{q_i }} \right)} ,
\]

\noindent where
\[
f_{Td} (x) = f_T \left( x \right) - 4f_d \left( x \right),
\quad
x > 0.
\]

We have

\begin{align}
\label{eq13}
{f}''_{Td} (x) & = {f}''_T \left( x \right) - 4{f}''_d \left( x \right) \notag\\
& = \frac{1 + x^2}{4x^2(x + 1)} - \frac{x^{3 / 2} + 1}{2\sqrt 2 \;x^{3 / 2}(x
+ 1)^{3 / 2}}\notag\\
 &= \frac{1}{x^2(x + 1)\sqrt {2x + 2} }\times m_2 (x),
\end{align}

\noindent
where $m_2 (x),\;x > 0$ is given by
\[
m_2 (x) = \left( {\frac{x^2 + 1}{2}} \right)\left( {\frac{x + 1}{2}}
\right)^{1 / 2} - \sqrt x \left( {\frac{x^{3 / 2} + 1}{2}} \right).
\]

The measures ${f}''_T \left( x \right)$ and ${f}''_d \left( x \right)$ appearing in (\ref{eq13}) are obtained from (\ref{eq11}) and by taking $s = 1$ and $s = \frac{1}{2}$. (dividing by 4) respectively. The graph of the function $m_2 (x),\;x > 0$ is given by

\begin{figure}[htbp]
\centerline{\includegraphics[width=2.00in,height=2.00in]{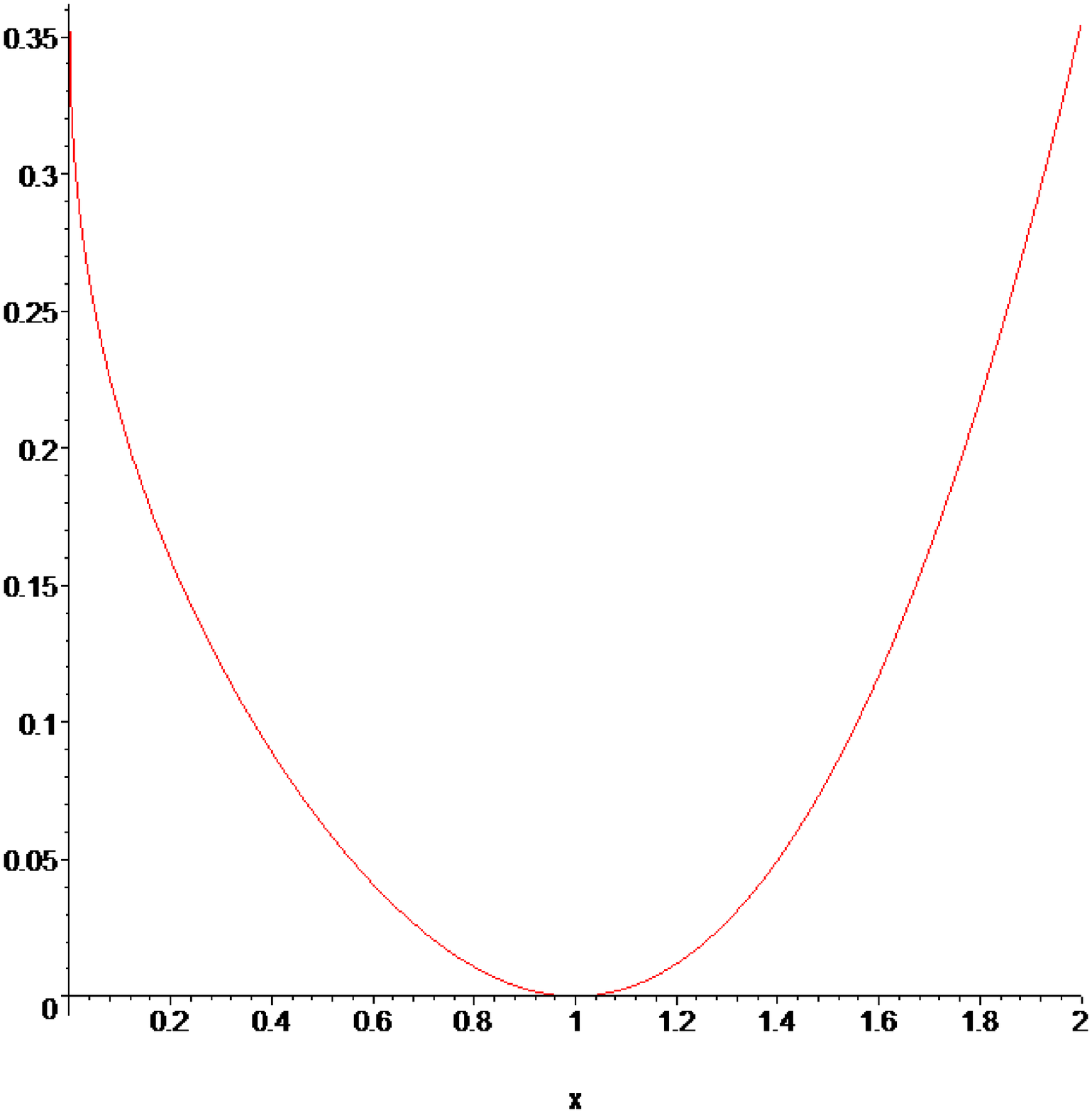}}
\label{fig2}
\end{figure}

From the above graph we observe that the function $m_2 (x) \ge 0,\;\forall x
> 0$. This allows us to conclude that ${f}''_{Td} (x) \ge 0,\;x > 0$, and
hence, $f_{Td} (x)$ is convex for all $x > 0$. Also $f_{Td} (1) = 0$. Thus
by the application of the Lemma 1.1, we conclude that measure $D_{Td}
(P\vert \vert Q)$ is nonnegative and convex for all $(P,Q) \in \Gamma _n
\times \Gamma _n $.

\bigskip
\noindent
(iii) \textbf{For }$D_{Jd} (P\vert \vert Q)$\textbf{: }We can write
\[
D_{Jd} (P\vert \vert Q) = \frac{1}{8}J(P\vert \vert Q) - 4d(P\vert \vert Q)
 = \sum\limits_{i = 1}^n {q_i f_{Jd} \left( {\frac{p_i }{q_i }} \right)} ,
\]

\noindent where
\[
f_{Jd} (x) = \frac{1}{8}f_J \left( x \right) - 4f_d \left( x \right),
\quad
x > 0.
\]

We have
\begin{align}
\label{eq14}
{f}''_{Jd} (x) &= \frac{1}{8}{f}''_J \left( x \right) - 4{f}''_d \left( x
\right)\notag\\
& = \frac{x + 1}{8x^2} - \frac{x^{3 / 2} + 1}{2\sqrt 2 \;x^{3 / 2}(x + 1)^{3
/ 2}}\notag\\
& = \frac{1}{x^2(x + 1)\sqrt {2x + 2} }\times m_3 (x),
\end{align}

\noindent
where $m_3 (x),\;x > 0$ is given by
\[
m_3 (x) = \left( {\frac{x + 1}{2}} \right)^{5 / 2} - \sqrt x \left(
{\frac{x^{3 / 2} + 1}{2}} \right).
\]

The measures ${f}''_J \left( x \right)$ and ${f}''_d \left( x \right)$ appearing in (\ref{eq14}) are obtained from (\ref{eq10}) and (\ref{eq11}) by taking $s = 0$ (or $s = 1)$ and $s = \frac{1}{2}$ (dividing by 4) respectively. The graph of the function $m_3 (x),\;x > 0$ is given by

\begin{figure}[htbp]
\centerline{\includegraphics[width=2.00in,height=2.00in]{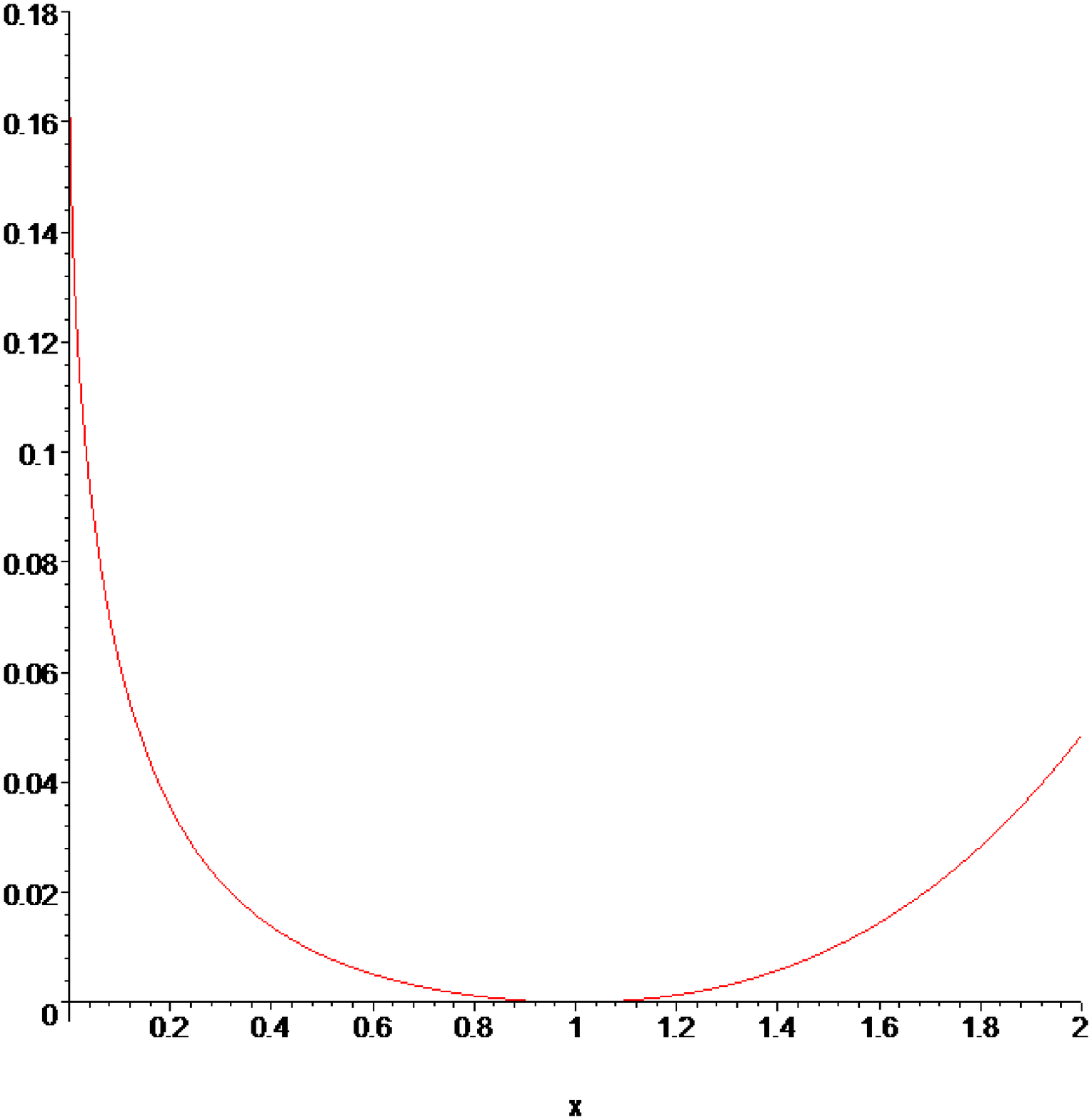}}
\label{fig3}
\end{figure}

From the above graph we observe that the function $m_3 (x) \ge 0,\;\forall x > 0$. This allows us to conclude that ${f}''_{Jd} (x) \ge 0,\;x > 0$, and hence, $f_{Jd} (x)$ is convex for all $x > 0$. Also $f_{Jd} (1) = 0$. Thus by the application of the Lemma 1.1, we conclude that measure $D_{Jd} (P\vert \vert Q)$ is nonnegative and convex for all $(P,Q) \in \Gamma _n \times \Gamma _n $.

\bigskip
\noindent
(iv) \textbf{For }$D_{dh} (P\vert \vert Q)$\textbf{: }We can write
\[
D_{dh} (P\vert \vert Q) = 4d(P\vert \vert Q) - h(P\vert \vert Q)
 = \sum\limits_{i = 1}^n {q_i f_{dh} \left( {\frac{p_i }{q_i }} \right)} ,
\]

\noindent where
\[
f_{dh} (x) = 4f_d \left( x \right) - f_h \left( x \right), \quad x > 0.
\]

We have
\begin{align}
\label{eq15}
{f}''_{dh} (x) &= 4{f}''_d \left( x \right) - {f}''_h \left( x \right)\notag\\
& = \frac{x^{3 / 2} + 1}{2\sqrt 2 \;x^{3 / 2}(x + 1)^{3 / 2}} - \frac{1}{4x^{3 / 2}}\notag\\
& = \frac{2(x^{3 / 2} + 1) - (x + 1)\sqrt {2x + 2} }{4x^{3 / 2}(x + 1)\sqrt
{2x + 2} } \notag\\
& = \frac{1}{x^{3 / 2}(x + 1)\sqrt {2x + 2} }\left[ {\frac{x^{3 / 2} + 1}{2}
- \left( {\frac{x + 1}{2}} \right)^{3 / 2}} \right], \quad \forall x > 0,
\end{align}

\noindent
where ${f}''_d \left( x \right)$ and ${f}''_h \left( x \right)$ are obtained from (\ref{eq11}) and (\ref{eq10}) by taking $s = \frac{1}{2}$(dividing by 4) and $s = \frac{1}{2}$ (dividing by 8) respectively. The non-negativity of the expression (\ref{eq15}) follows from the fact that the function $\left( {\frac{x^s + 1}{2}} \right)^{1 / s},\;s \ne 0$ is monotonically increasing function of $s$ \cite{beb}. Thus, we have ${f}''_{dh} (x) \ge 0$, $\forall x > 0$, and hence, $f_{dh} (x)$ is convex for all $x > 0$. Also $f_{dh} (1) = 0$. Thus by the application of the Lemma 1.1, we conclude that measure $D_{dh} (P\vert \vert Q)$ is nonnegative and convex for all $(P,Q) \in \Gamma _n \times \Gamma _n $.

\bigskip
\noindent
(v) \textbf{For }$D_{dI} (P\vert \vert Q)$\textbf{: }We can write
\[
D_{dI} (P\vert \vert Q) = 4d(P\vert \vert Q) - I(P\vert \vert Q)
 = \sum\limits_{i = 1}^n {q_i f_{dI} \left( {\frac{p_i }{q_i }} \right)} ,
\]

\noindent where
\[
f_{dI} (x) = f_d \left( x \right) - 4f_I \left( x \right),
\quad
x > 0.
\]

We have
\begin{align}
\label{eq16}
{f}''_{dI} (x) & = {f}''_d \left( x \right) - 4{f}''_I \left( x \right)\notag\\
& = \frac{x^{3 / 2} + 1}{2\sqrt 2 \;x^{3 / 2}(x + 1)^{3 / 2}} - \frac{1}{2x(x
+ 1)} \notag\\
& = \frac{x^{3 / 2} + 1 - \sqrt x \sqrt {2x + 2} }{2x^{3 / 2}(x + 1)\sqrt {2x
+ 2} }\notag\\
& = \frac{1}{x^{3 / 2}(x + 1)\sqrt {2x + 2} }\left[ {\left( {\frac{x^{3 / 2}
+ 1}{2}} \right) - \sqrt x \sqrt {\frac{x + 1}{2}} } \right],
\end{align}

\noindent
where ${f}''_d \left( x \right)$ and ${f}''_I \left( x \right)$ are obtained
from (\ref{eq11}) by taking $s = \frac{1}{2}$ (dividing by 4) and $s = 0$
respectively. Now we shall prove the non-negativity of the expression (\ref{eq16}).
We know that \cite{tan3}, pp.209:
\begin{center}
$\sqrt x \le \left( {\frac{\sqrt x + 1}{2}} \right)^2$ \,\, and  \,\,$\left( {\frac{\sqrt
x + 1}{2}} \right)\sqrt {\frac{x + 1}{2}} \le \left( {\frac{x + 1}{2}}
\right)$.
\end{center}

This give
\begin{equation}
\label{eq17}
\sqrt x \sqrt {\frac{x + 1}{2}} \le \left( {\frac{\sqrt x + 1}{2}}
\right)^2\sqrt {\frac{x + 1}{2}} \le \left( {\frac{x + 1}{2}} \right)\left(
{\frac{\sqrt x + 1}{2}} \right).
\end{equation}

By simple calculations, we can check that
\begin{equation}
\label{eq18}
\left( {\frac{x + 1}{2}} \right)\left( {\frac{\sqrt x + 1}{2}} \right) \le
\frac{x^{3 / 2} + 1}{2}.
\end{equation}

The expressions (\ref{eq17}) and (\ref{eq18}) together give the non-negativity of the expression (\ref{eq16}), i.e., ${f}''_{dI} (x) \ge 0$, $\forall x > 0$, and hence, $f_{dI} (x)$ is convex for all $x > 0$. Also, $f_{dI} (1) = 0$. Thus by the application of the Lemma 1.1, we conclude that measure $D_{dI}  (P\vert \vert Q)$ is nonnegative and convex for all $(P,Q) \in \Gamma _n \times \Gamma _n $.

\bigskip
\noindent
(vi) \textbf{For }$D_{d\Delta } (P\vert \vert Q)$\textbf{: }We can write
\[
D_{d\Delta } (P\vert \vert Q) = 4d(P\vert \vert Q) - \Delta (P\vert \vert Q)
 = \sum\limits_{i = 1}^n {q_i f_{d\Delta } \left( {\frac{p_i }{q_i }}
\right)} ,
\]

\noindent where
\[
f_{d\Delta } (x) = 4f_d \left( x \right) - f_\Delta \left( x \right),
\quad
x > 0.
\]

We have
\begin{align}
\label{eq19}
{f}''_{d\Delta } (x)&  = 4{f}''_d \left( x \right) - {f}''_\Delta \left( x \right)\notag\\
& = \frac{x^{3 / 2} + 1}{2\sqrt 2 \;x^{3 / 2}(x + 1)^{3 / 2}} - \frac{2}{(x + 1)^3}\notag\\
& = \frac{(x^{3 / 2} + 1)(x + 1)^2 - 4x^{3 / 2}\sqrt {2x + 2} }{2x^{3 / 2}(x + 1)^3\sqrt {2x + 2} }\notag\\
& = \frac{8}{2x^{3 / 2}(x + 1)^3\sqrt {2x + 2} }\sqrt {\frac{x + 1}{2}}
\left[ {\left( {\frac{x^{3 / 2} + 1}{2}} \right)\left( {\frac{x + 1}{2}}
\right)^{3 / 2} - x^{3 / 2}} \right]
\end{align}

\noindent
where ${f}''_d \left( x \right)$ and ${f}''_\Delta \left( x \right)$ are obtained from (\ref{eq11}) and (\ref{eq10}) by taking $s = \frac{1}{2}$ (dividing by 4) and $s = - 1$ (multiplying by 4) respectively.

\smallskip
Now we shall prove the non-negativity of the expression (\ref{eq19}). We can easily check that
\begin{equation}
\label{eq20}
\left( {\frac{\sqrt x + 1}{2}} \right)^3 \le \frac{x^{3 / 2} + 1}{2}.
\end{equation}

On the other side we know that \cite{tan3}, pp.209:
\begin{equation}
\label{eq21}
x^{3 / 2} \le \left( {\frac{\sqrt x + 1}{2}} \right)^3\left( {\frac{x +
1}{2}} \right)^{3 / 2}.
\end{equation}

Expressions (\ref{eq20}) and (\ref{eq21}) together give
\begin{equation}
\label{eq22}
x^{3 / 2} \le \left( {\frac{\sqrt x + 1}{2}} \right)^3\left( {\frac{x +
1}{2}} \right)^{3 / 2} \le \left( {\frac{x^{3 / 2} + 1}{2}} \right)\left(
{\frac{x + 1}{2}} \right)^{3 / 2}.
\end{equation}

Expression (\ref{eq22}) proves the non-negativity of the expression (\ref{eq19}), i.e., ${f}''_{d\Delta } (x) \ge 0$, $\forall x > 0$, and hence, $f_{d\Delta } (x)$ is convex for all $x > 0$. Also, $f_{d\Delta } (1) = 0$. Thus by the application of the Lemma 1.1, we conclude that measure $D_{d\Delta } (P\vert \vert Q)$ is nonnegative and convex for all $(P,Q) \in \Gamma _n \times \Gamma _n $.
\end{proof}

\section{A Sequence of Inequalities among Difference of Divergence Measures}

In this section our aim is to establish \textit{a sequence of inequalities} among \textit{difference of divergence measures}. The main result of the paper is summarized in the theorem below.

\begin{theorem} The following sequence of inequalities hold:
\begin{align}
\label{eq23}
D_{h\Delta } (P\vert \vert Q)& \le \frac{4}{5}D_{d\Delta } (P\vert \vert Q)
\le 4D_{dh} (P\vert \vert Q) \le \frac{12}{7}D_{dI} (P\vert \vert Q) \le\notag\\
 \le & 3D_{hI} (P\vert \vert Q) \le D_{Th} (P\vert \vert Q) \le
\frac{4}{3}D_{Td} (P\vert \vert Q) \le \frac{1}{4}D_{\Psi \Delta } (P\vert
\vert Q) \le \notag\\
& \le \frac{1}{3}D_{\Psi h} (P\vert \vert Q) \le \frac{4}{11}D_{\Psi d}
(P\vert \vert Q) \le \frac{1}{2}D_{\Psi T} (P\vert \vert Q).
\end{align}
\end{theorem}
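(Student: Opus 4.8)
\section*{Proof proposal}

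The plan is to reduce the entire chain \eqref{eq23} to a single family of pointwise inequalities between the generating functions, exactly in the spirit of the proof of Lemma~3.1. Every term appearing in \eqref{eq23} is a nonnegative $f$-divergence $C_{g}(P\|Q)$ for an explicit convex normalized $g$; for instance $D_{h\Delta}=\sum q_i\,g_{h\Delta}(p_i/q_i)$ with $g_{h\Delta}=f_h-f_\Delta$, and likewise for all the differences of differences that the coefficients in \eqref{eq23} encode. A chain $\alpha\,C_{g_1}\le\beta\,C_{g_2}$ will follow from Lemma~1.2 (or more directly from Lemma~1.1 applied to $\beta g_2-\alpha g_1$) once we show that $\beta g_2''(x)-\alpha g_1''(x)\ge 0$ for all $x>0$, together with the normalization $(\beta g_2-\alpha g_1)(1)=0$, which is automatic. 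So the real content is a list of second-derivative inequalities.

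Concretely, I would first record, from \eqref{eq10}--\eqref{eq11}, the closed forms of all the relevant second derivatives: $f_\Delta''(x)=4/(x+1)^3$, $f_h''(x)=\tfrac14 x^{-3/2}$, $f_I''(x)=\tfrac{1}{2x(x+1)}$, $f_d''(x)=\dfrac{x^{3/2}+1}{2\sqrt2\,x^{3/2}(x+1)^{3/2}}$, $f_T''(x)=\dfrac{1+x^2}{4x^2(x+1)}$, $f_J''(x)=\dfrac{x+1}{8x^2}$ (after the appropriate scalar normalizations used in Section~3), and $f_\Psi''(x)=\dfrac{x^3+1}{8x^3}$. Then each link in \eqref{eq23} becomes: show a specific nonnegative combination of these seven rational/algebraic functions is $\ge 0$ on $(0,\infty)$. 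After clearing the common denominators (which are manifestly positive, being products of powers of $x$, $x+1$ and $\sqrt{2x+2}$), each such inequality collapses to ``an explicit function $M_k(x)\ge 0$ on $(0,\infty)$'', in exactly the format of the auxiliary functions $m_1,m_2,m_3$ already used in Lemma~3.1. I would substitute $x=t^2$ to kill the half-integer powers, reducing each $M_k$ to a polynomial in $t$, and then verify nonnegativity either by exhibiting it as a sum of squares / a product of $(t-1)^2$ with a positive polynomial, or — matching the style of the paper — by displaying the graph. The symmetry $x\mapsto 1/x$ of all the divergences means each $M_k$ is palindromic, so the substitution $u=t+1/t\ge 2$ will typically bring it to a low-degree polynomial in $u$ that is visibly increasing and nonnegative at $u=2$.

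A cleaner bookkeeping device, which I would actually use to organize the write-up, is to notice that consecutive ratios in \eqref{eq23} are built from a handful of repeated elementary facts about power means: the monotonicity of $s\mapsto\bigl(\tfrac{x^s+1}{2}\bigr)^{1/s}$ (already invoked for $D_{dh}$), and the two inequalities \eqref{eq20}--\eqref{eq21} and \eqref{eq17}--\eqref{eq18} proved en route to $D_{dI}$ and $D_{d\Delta}$. Each of the twelve links should be expressible as a weighted combination of these same building blocks; so after proving the first few links ``by hand'' I expect the remaining ones to follow by the same substitutions with different constants. The inequality $D_{hI}\le\tfrac13 D_{Th}$ and its neighbours involving $I$ are the ones not directly expressible through the power-mean monotonicity, so those I would handle by the $x=t^2$, then $u=t+1/t$ reduction to an explicit univariate polynomial.

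The main obstacle is bookkeeping rather than conceptual: keeping the scalar normalizations straight (the paper's $f_\Psi, f_d$, etc.\ are normalized so that the \emph{coefficients} $\tfrac1{16},\,4,\,\tfrac18,\dots$ are absorbed, and a wrong power of two anywhere propagates through the whole chain), and making sure that at each link one picks the combination whose sign one can actually certify. I anticipate that the hardest single link to certify cleanly is $\tfrac{4}{3}D_{Td}\le\tfrac14 D_{\Psi\Delta}$, i.e.\ the transition from the logarithmic/half-integer world ($T$ and $d$) to the purely rational world ($\Psi$ and $\Delta$): after clearing denominators this $M_k$ has the highest degree in $t$ and the least obvious factorization, so that is where a computer-algebra SOS check (or the displayed graph, as elsewhere in the paper) does the real work. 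Once all twelve second-derivative inequalities $\beta_k g_{k+1}''-\alpha_k g_k''\ge0$ are in hand, Lemma~1.1 applied to each $\beta_k g_{k+1}-\alpha_k g_k$ (which is convex and vanishes at $1$) gives $\alpha_k C_{g_k}(P\|Q)\le\beta_k C_{g_{k+1}}(P\|Q)$, and concatenating these is precisely \eqref{eq23}.
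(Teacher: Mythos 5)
Your plan is sound and shares the paper's overall architecture — split \eqref{eq23} into individual links, express every term as a Csisz\'{a}r $f$-divergence, and reduce each link to a pointwise statement about the second derivatives read off from \eqref{eq10}--\eqref{eq11} — but it certifies each link differently. The paper (Propositions 4.1--4.8, supplemented by the previously known link $3D_{hI}\le D_{Th}$ from \eqref{eq2}--\eqref{eq3} and by \eqref{eq50}) works with the \emph{ratio} $g(x)=f_1''(x)/f_2''(x)$: it shows via the sign of $g'$ (controlled by an auxiliary function $k_i(x)\ge 0$ that is only checked graphically) that $g$ increases on $(0,1)$ and decreases on $(1,\infty)$, evaluates $g(1)$ by L'H\^{o}pital, and applies Lemma~1.2 with $M=g(1)$; this has the advantage of exhibiting the constants $\tfrac45,\,5,\,\tfrac37,\dots$ as the best possible ones. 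You instead take the constants as given and verify the \emph{difference} $\beta f_2''-\alpha f_1''\ge 0$ directly, then apply Lemma~1.1 to the convex normalized function $\beta g_2-\alpha g_1$ — exactly the Section~3 technique — which avoids the monotonicity analysis and the L'H\^{o}pital evaluation at the cost of not proving sharpness. The two formulations are equivalent (since the paper shows $\sup_x f_1''/f_2''=g(1)=\beta/\alpha$, your pointwise difference inequality does hold), and both bottom out in the same kind of positivity check of an explicit algebraic function, which the paper itself certifies only by displaying graphs; your suggestion of a sum-of-squares certificate is, if anything, an improvement, and your uniform treatment also covers the link $3D_{hI}\le D_{Th}$ that the paper imports from earlier work rather than proving here. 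One correction to the proposed mechanization: the substitution $x=t^2$ removes the powers $x^{3/2}$ but not the factors $(x+1)^{3/2}$ and $\sqrt{2x+2}$ contributed by $f_d''$ (the $s=\tfrac12$ case of \eqref{eq11}), so after clearing denominators each $M_k$ has the form $A(t)+B(t)\sqrt{t^2+1}$ rather than a polynomial; you need one further isolate-and-square step (with a sign check on $B$) before the palindromic reduction $u=t+1/t$ applies.
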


The proof of the above theorem is based on the following propositions. In order to prove the propositions, we shall be using frequently, expressions (\ref{eq10}) and (\ref{eq11}) with particular values.

\begin{proposition} We have
\[
D_{h\Delta } (P\vert \vert Q) \le \frac{4}{5}D_{d\Delta } (P\vert \vert Q).
\]
\end{proposition}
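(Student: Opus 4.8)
The plan is to reduce the inequality $D_{h\Delta}(P\|Q) \le \tfrac{4}{5} D_{d\Delta}(P\|Q)$ to a pointwise statement about generating functions, exactly in the style used throughout Section 3. Writing everything in terms of Csisz\'ar $f$-divergences, the claim is equivalent to showing that the function
\[
g(x) = \tfrac{4}{5}\,f_{d\Delta}(x) - f_{h\Delta}(x)
     = \tfrac{4}{5}\bigl(4 f_d(x) - f_\Delta(x)\bigr) - \bigl(f_h(x) - f_\Delta(x)\bigr)
\]
is convex on $(0,\infty)$ and normalized, i.e. $g(1)=0$; then Lemma~1.1 gives $C_g(P\|Q) = \tfrac{4}{5}D_{d\Delta} - D_{h\Delta} \ge 0$. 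Here $f_h, f_\Delta, f_d$ are the generators of $h$, $\tfrac14\Delta$ (equivalently $\Delta$, up to the constant already absorbed in the definitions) and $4d$ respectively, all of which appear explicitly in the proof of Lemma~3.1, so their second derivatives are already recorded: $f_h''(x) = \tfrac{1}{4}x^{-3/2}$, $f_\Delta''(x) = 2(x+1)^{-3}$, and $4f_d''(x) = (x^{3/2}+1)\big/\bigl(2\sqrt2\,x^{3/2}(x+1)^{3/2}\bigr)$.

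The key steps, in order, are as follows. First, verify normalization $g(1)=0$; this is immediate since each of $f_h, f_\Delta, f_d$ vanishes at $1$. Second, compute $g''(x) = \tfrac{4}{5}\bigl(4f_d''(x) - f_\Delta''(x)\bigr) - \bigl(f_h''(x) - f_\Delta''(x)\bigr) = \tfrac{16}{5}f_d''(x) - f_h''(x) + \tfrac{1}{5}f_\Delta''(x)$ using the formulas just quoted. Third, put this over a common denominator of the form $x^{3/2}(x+1)^3\sqrt{2x+2}$ (the least common denominator of the three pieces), so that $g''(x) = \bigl[x^{3/2}(x+1)^3\sqrt{2x+2}\bigr]^{-1} m(x)$ for an explicit function $m(x)$ built from powers of $x$ and of $(x+1)$. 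Fourth, show $m(x) \ge 0$ for all $x > 0$. By the pattern of the other cases in this paper, I expect this last step to be handled either by a graph of $m(x)$ (as in parts (i)--(iii) of Lemma~3.1) or, preferably, by exhibiting $m(x)$ as a sum/product of terms each known to be nonnegative --- for instance invoking the monotonicity of $\bigl((x^s+1)/2\bigr)^{1/s}$ in $s$, or the elementary chain inequalities $x^{3/2} \le \bigl(\tfrac{\sqrt x+1}{2}\bigr)^3\bigl(\tfrac{x+1}{2}\bigr)^{3/2} \le \tfrac{x^{3/2}+1}{2}$ already used in parts (v)--(vi).

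The main obstacle is precisely step four: establishing $m(x)\ge 0$ rigorously rather than graphically. The function $m$ mixes a half-integer power $x^{3/2}$ with polynomial powers of $x$ and $(x+1)$, so it is not a polynomial in a single radical; the natural move is the substitution $x = t^2$, $t>0$, after which $m(t^2)$ becomes a polynomial (or a polynomial combination of $t$ and $(t^2+1)$), and one can hope to factor out the obvious double root at $t=1$ (coming from $g''(1)$ being a boundary-type value) and argue nonnegativity of the remaining factor. One should also double-check the constant $\tfrac{4}{5}$: it is presumably the sharp value for which $g''$ stays nonnegative, so the cofactor after removing the $(t-1)^2$ factor should itself vanish at $t=1$, confirming tightness. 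Once $m(x)\ge 0$ is in hand, convexity of $g$ and hence the proposition follow from Lemma~1.1 exactly as in the six cases of Lemma~3.1.
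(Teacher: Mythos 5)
Your reduction is set up correctly and the bookkeeping of generators checks out, but the proof stops exactly where the work begins. Everything through the identity $g''(x)=\tfrac{16}{5}f_d''(x)-f_h''(x)+\tfrac{1}{5}f_\Delta''(x)$ and the common-denominator rewriting is routine; the entire content of the proposition is the pointwise bound $m(x)\ge 0$, equivalently $f''_{h\Delta}(x)\le\tfrac{4}{5}\,f''_{d\Delta}(x)$ for all $x>0$, and you explicitly defer that step (``I expect this last step to be handled either by a graph \dots or, preferably, by exhibiting $m(x)$ as a sum/product of nonnegative terms'') without carrying out either option. None of the ready-made inequalities you cite from parts (iv)--(vi) of Lemma 3.1 delivers this particular bound, and the step is genuinely delicate: both $f''_{h\Delta}$ and $f''_{d\Delta}$ vanish to exactly second order at $x=1$, and $\tfrac{4}{5}$ is precisely their limiting ratio there, so $\tfrac{4}{5}f''_{d\Delta}-f''_{h\Delta}$ vanishes to (at least) fourth order at $x=1$. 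After your substitution $x=t^2$ you would therefore have to extract a factor $(t-1)^4$, not the $(t-1)^2$ you anticipate, before positivity of the cofactor can even be examined. As written, this is a correct plan with the decisive estimate missing.

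For comparison, the paper takes a formally different route: it applies Lemma 1.2 to the ratio $g_{h\Delta\_d\Delta}(x)=f''_{h\Delta}(x)/f''_{d\Delta}(x)$ rather than Lemma 1.1 to the linear combination. It computes $g'_{h\Delta\_d\Delta}$, shows it is positive for $x<1$ and negative for $x>1$ (the sign being governed by an auxiliary function $k_1(x)$ whose nonnegativity the paper itself only reads off a graph), concludes the ratio is maximized at $x=1$, and evaluates that maximum as $\tfrac{4}{5}$ by a double application of l'H\^{o}pital's rule. The two routes are equivalent in analytic content --- each needs exactly the bound $f''_{h\Delta}\le\tfrac{4}{5}f''_{d\Delta}$ --- so yours is not more elementary; but the paper at least reduces the claim to one explicit function $k_1$ and a limit computation, whereas your write-up never exhibits $m(x)$ explicitly nor gives any argument (graphical or algebraic) for its sign. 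To complete your version you would need to carry out the $x=t^2$ factorization you sketch, or else reproduce the paper's monotonicity-of-the-ratio argument.
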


\begin{proof} Let us consider
\begin{align}
g_{h\Delta \_d\Delta } (x) &= \frac{{f}''_{h\Delta } (x)}{{f}''_{d\Delta }
(x)} = \frac{\sqrt {(2x + 2)} \left[ {(x + 1)^3 - 8x^{3 / 2}}
\right]}{2\left[ {(x^{3 / 2} + 1)(x + 1)^2 - 4x^{3 / 2}\sqrt {2x + 2} }
\right]} \notag\\
& = \frac{\sqrt {(2x + 2)} \left[ {\left( {\sqrt x + 1} \right)^2(x + 1) +
4x\left( {\sqrt x - 1} \right)^2} \right]}{2\left[ {(x^{3 / 2} + 1)(x + 1)^2
- 4x^{3 / 2}\sqrt {2x + 2} } \right]},  \quad x \ne 1\notag
\end{align}

\noindent
for all $x \in (0,\infty )$, where ${f}''_{h\Delta } (x) = {\varphi }''_{ -
1} (x) - {\psi }''_{ - 1} (x)$ and ${f}''_{d\Delta } (x) = {\psi }''_{1 / 2}
(x) - {\psi }''_{ - 1} (x)$. Calculating the first order derivative of the
function $g_{I\Delta \_h\Delta } (x)$ with respect to $x$, one gets
\begin{equation}
\label{eq24}
{g}'_{h\Delta \_d\Delta } (x) = \frac{ - 3(x + 1)\left( {\sqrt x - 1}
\right)\sqrt {2x + 2} }{4\left[ {(x^{3 / 2} + 1)(x + 1)^2 - 4x^{3 / 2}\sqrt
{2x + 2} } \right]^2}\times k_1 (x),
\end{equation}

\noindent where
\begin{align}
\label{eq25}
k_1 (x) = x^3 - 8x^{5 / 2} - &5x^2 - 8x^{3 / 2} - 5x - 8x^{1 / 2}  + 1 +\notag\\
& + \,4\sqrt {2x(x + 1)} \left( {\sqrt x + 1} \right)\left( {x + 1} \right).
\end{align}

The graph of the function $k_1 (x)$, $x > 0$ is given by

\begin{figure}[htbp]
\centerline{\includegraphics[width=2.00in,height=2.00in]{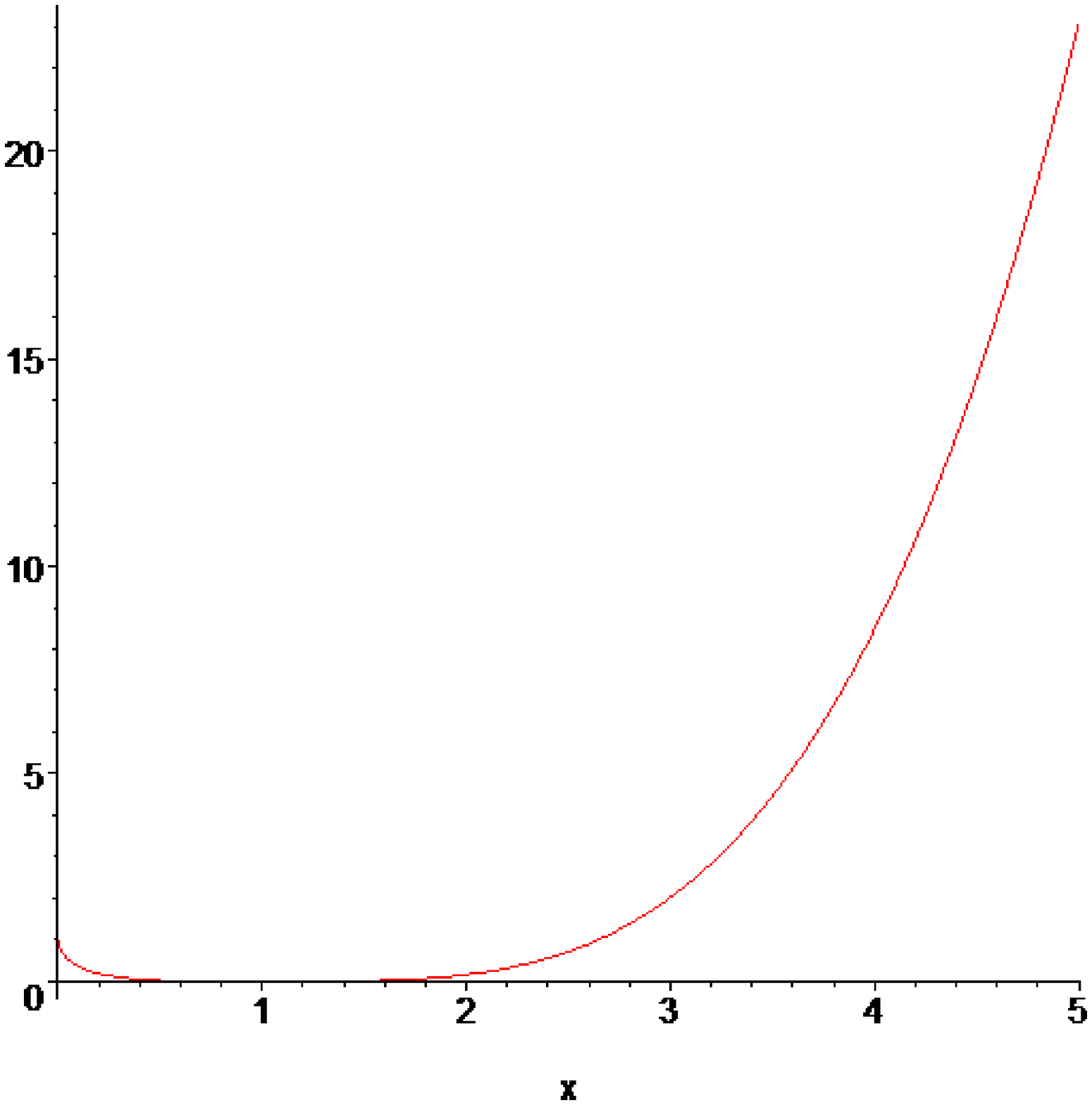}}
\label{fig4}
\end{figure}

We observe from the above graph that the function $k_1 (x) \ge 0$, $\forall x > 0$. Thus from the from the expression (\ref{eq24}), we conclude that
\begin{equation}
\label{eq26}
{g}'_{h\Delta \_d\Delta } (x) = \begin{cases}
 { > 0,} & {x < 1} \\
 { < 0,} & {x > 1} \\
\end{cases}
\end{equation}

Let us calculate now $g_{h\Delta \_d\Delta } (1)$. We observe that
\[
\left. {g_{h\Delta \_d\Delta } (x)} \right|_{x = 1} = \left.
{\frac{{f}''_{h\Delta } (x)}{{f}''_{d\Delta } (x)}} \right|_{x = 1} = \left.
{\frac{\left( {{f}''_{h\Delta } (x)} \right)^\prime }{\left( {{f}''_{d\Delta
} (x)} \right)^\prime }} \right|_{x = 1} = \mbox{indermination}.
\]

Calculating the second order derivatives of numerator and denominator of the function $g_{I\Delta \_h\Delta } (x)$, we have
\begin{equation}
\label{eq27}
g_{h\Delta \_d\Delta } (1) = \left. {\frac{\left( {{f}''_{h\Delta } (x)}
\right)^{\prime \prime }}{\left( {{f}''_{d\Delta } (x)} \right)^{\prime
\prime }}} \right|_{x = 1} = \frac{\textstyle{3 \over 4}}{\textstyle{{15}
\over {16}}} = \frac{4}{5}.
\end{equation}

By the application of the inequalities (\ref{eq5}) with (\ref{eq27}) we get the required result.
\end{proof}

\begin{proposition} We have
\[
D_{d\Delta } (P\vert \vert Q) \le 5\,D_{dh} (P\vert \vert Q).
\]
\end{proposition}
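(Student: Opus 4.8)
The plan is to copy the proof of Proposition 4.1, now using the pair of generating functions $f_{d\Delta }$ and $f_{dh }$ built in the proof of Lemma 3.1. First I would form the ratio $g_{d\Delta \_dh}(x)={f}''_{d\Delta }(x)/{f}''_{dh }(x)$, reading ${f}''_{d\Delta }$ off (\ref{eq19}) and ${f}''_{dh }$ off (\ref{eq15}); cancelling the common factors $x^{3/2}$ and $\sqrt{2x+2}$ this becomes
\[
g_{d\Delta \_dh}(x)=\frac{2\left[(x^{3/2}+1)(x+1)^{2}-4x^{3/2}\sqrt{2x+2}\,\right]}{(x+1)^{2}\left[2(x^{3/2}+1)-(x+1)\sqrt{2x+2}\,\right]},\qquad x>0,
\]
a nonnegative function, the two bracketed quantities being the numerators of ${f}''_{d\Delta }$ and ${f}''_{dh }$ already shown to be $\ge 0$ in the proof of Lemma 3.1. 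It suffices to prove $g_{d\Delta \_dh}(x)\le 5$ for all $x>0$: Lemma 1.2, applied with $f_{1}=f_{d\Delta }$, $f_{2}=f_{dh }$ (legitimate since ${f}''_{dh }>0$ off $x=1$ and $g_{d\Delta \_dh}$ has a finite limit at $x=1$), then yields ${f}''_{d\Delta }(x)\le 5\,{f}''_{dh }(x)$, hence $D_{d\Delta }(P\vert\vert Q)\le 5\,D_{dh }(P\vert\vert Q)$. The lower constant of Lemma 1.2 is irrelevant to this one-sided conclusion; in fact $\lim_{x\to 0^{+}}g_{d\Delta \_dh}(x)=\lim_{x\to\infty }g_{d\Delta \_dh}(x)=2+\sqrt{2}$, so $m=2+\sqrt{2}$ may be inserted.

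To locate the maximum I would differentiate $g_{d\Delta \_dh}$ and, just as in (\ref{eq24}), factor the result into a sign-transparent part times an auxiliary function,
\[
{g}'_{d\Delta \_dh}(x)=\frac{c\,(x+1)\left(\sqrt{x}-1\right)\sqrt{2x+2}}{\left[2(x^{3/2}+1)-(x+1)\sqrt{2x+2}\,\right]^{2}}\times k_{2}(x),
\]
with $c$ a fixed constant of the correct sign and $k_{2}(x)$ an explicit expression carrying the surd $\sqrt{2x(x+1)}$, in analogy with $k_{1}$ in Proposition 4.1. The key step is then to check $k_{2}(x)\ge 0$ on $(0,\infty )$, which I would do by plotting $k_{2}$ (producing another figure of the same type as the one for $k_{1}$). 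Granted $k_{2}\ge 0$, the only sign change of ${g}'_{d\Delta \_dh}$ is at $x=1$, so $g_{d\Delta \_dh}$ increases on $(0,1)$, decreases on $(1,\infty )$, and $\sup_{x>0}g_{d\Delta \_dh}(x)=g_{d\Delta \_dh}(1)$. Finally $g_{d\Delta \_dh}(1)$ is a $0/0$ form, still indeterminate after one differentiation of numerator and denominator, so a second application of l'Hôpital's rule gives
\[
g_{d\Delta \_dh}(1)=\left.\frac{\bigl({f}''_{d\Delta }(x)\bigr)''}{\bigl({f}''_{dh }(x)\bigr)''}\right|_{x=1}=\frac{15/64}{3/64}=5,
\]
which is the desired bound, and the proposition follows from Lemma 1.2.

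The only non-routine point is the non-negativity of $k_{2}$ on all of $(0,\infty )$; everything else — the closed forms of the second derivatives coming from (\ref{eq10})--(\ref{eq11}) with $s=\tfrac12$ and $s=-1$, the two l'Hôpital evaluations at $x=1$, and the appeal to Lemma 1.2 — is bookkeeping. As throughout this paper $k_{2}\ge 0$ is certified graphically; should the plot be in doubt, the fall-back is to put $t=\sqrt{x}$, isolate the radical $\sqrt{2x(x+1)}=t\sqrt{2(t^{2}+1)}$, and square twice, reducing $k_{2}(x)\ge 0$ to a polynomial inequality in $t$ divisible by $(t-1)^{2}$ — a long computation I would not run here.
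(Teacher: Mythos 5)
Your proposal follows the paper's proof essentially verbatim: form $g_{d\Delta\_dh}={f}''_{d\Delta}/{f}''_{dh}$, show it increases on $(0,1)$ and decreases on $(1,\infty)$ via the sign of its derivative, evaluate the $0/0$ limit at $x=1$ by a second application of l'H\^{o}pital to get $5$, and invoke Lemma 1.2. The one thing you miss is that the auxiliary function appearing in ${g}'_{d\Delta\_dh}$ is exactly the $k_1(x)$ of (\ref{eq25}) already certified nonnegative in Proposition 4.1, so no new plot is needed (the paper writes the limit as $\tfrac{15/16}{3/16}$ rather than your $\tfrac{15/64}{3/64}$, but the ratio is $5$ either way).
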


\begin{proof} Let us consider
\[
g_{d\Delta \_dh} (x) = \frac{{f}''_{d\Delta } (x)}{{f}''_{dh} (x)} =
\frac{2\left[ {(x + 1)^2(x^{3 / 2} + 1) - 4x^{3 / 2}\sqrt {2x + 2} }
\right]}{(x + 1)^2\left[ {2(x^{3 / 2} + 1) - (x + 1)\sqrt {2x + 2} }
\right]} , \quad x \ne 1,
\]

\noindent
for all $x \in (0,\infty )$, where ${f}''_{d\Delta } (x) = {\psi }''_{1 / 2}
(x) - {\psi }''_{ - 1} (x)$ and ${f}''_{dh} (x) = {\psi }''_{1 / 2} (x) -
{\varphi }''_{1 / 2} (x)$.

\smallskip
Calculating the first order derivative of the function $g_{d\Delta \_dh} (x)$ with respect to $x$, one gets
\begin{equation}
\label{eq28}
{g}'_{d\Delta \_dh} (x) = \frac{ - 3\left( {\sqrt x - 1} \right)\sqrt {2x +
2} }{(x + 1)^3\left[ {2(x^{3 / 2} + 1) - (x + 1)\sqrt {2x + 2} }
\right]^2}\times k_1 (x),
\end{equation}

\noindent
where $k_1 (x)$, $x > 0$ is as given in (\ref{eq25}). Since $k_1 (x) \ge 0$,
$\forall x > 0$. Thus from the from the expression (\ref{eq28}), we conclude that
\begin{equation}
\label{eq29}
{g}'_{d\Delta \_dh} (x) = \begin{cases}
 { > 0,} & {x < 1} \\
 { < 0,} & {x > 1} \\
\end{cases}
\end{equation}

Let us calculate now $g_{d\Delta \_dh} (1)$. We observe that
\[
\left. {g_{d\Delta \_dh} (x)} \right|_{x = 1} = \left. {\frac{{f}''_{d\Delta
} (x)}{{f}''_{dh} (x)}} \right|_{x = 1} = \left. {\frac{\left(
{{f}''_{d\Delta } (x)} \right)^\prime }{\left( {{f}''_{dh} (x)}
\right)^\prime }} \right|_{x = 1} = \mbox{indermination}.
\]

Calculating the second order derivatives of numerator and denominator of the function $g_{d\Delta \_dh} (x)$, we have
\begin{equation}
\label{eq30}
g_{d\Delta \_dh} (1) = \left. {\frac{\left( {{f}''_{d\Delta } (x)}
\right)^{\prime \prime }}{\left( {{f}''_{dh} (x)} \right)^{\prime \prime }}}
\right|_{x = 1} = \frac{\textstyle{{15} \over {16}}}{\textstyle{3 \over
{16}}} = 5.
\end{equation}

By the application of the inequalities (\ref{eq5}) with (\ref{eq30}) we get the requires result.
\end{proof}

\begin{proposition} We have
\[
D_{dh} (P\vert \vert Q) \le \frac{3}{7}D_{dI} (P\vert \vert Q).
\]
\end{proposition}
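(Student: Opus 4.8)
The plan is to apply Lemma 1.2 to the pair of generators $f_1 = f_{dh}$ and $f_2 = f_{dI}$, whose second derivatives were already recorded in the proof of Lemma 3.1, parts (iv) and (v). Writing $D_{dh}(P\vert\vert Q) = \sum_i q_i f_{dh}(p_i/q_i)$ and $D_{dI}(P\vert\vert Q) = \sum_i q_i f_{dI}(p_i/q_i)$, I would form
\[
g_{dh\_dI}(x) = \frac{{f}''_{dh}(x)}{{f}''_{dI}(x)}, \qquad x \ne 1 ,
\]
where, by (\ref{eq15}) and (\ref{eq16}), the common prefactor $\bigl[x^{3/2}(x+1)\sqrt{2x+2}\,\bigr]^{-1}$ cancels, leaving $g_{dh\_dI}(x) = u(x)/v(x)$ with
\[
u(x) = \frac{x^{3/2}+1}{2} - \Bigl(\frac{x+1}{2}\Bigr)^{3/2}, \qquad
v(x) = \frac{x^{3/2}+1}{2} - \sqrt{x}\,\sqrt{\frac{x+1}{2}} .
\]
The proof of Lemma 3.1(iv) (the power mean $\bigl(\tfrac{x^{s}+1}{2}\bigr)^{1/s}$ is increasing in $s$) gives $u(x)\ge 0$, and the proof of Lemma 3.1(v), i.e. inequalities (\ref{eq17})–(\ref{eq18}), gives $v(x) > 0$ for $x\ne 1$; hence $g_{dh\_dI}$ is well defined and positive on $(0,\infty)\setminus\{1\}$ and the positivity hypothesis ${f}''_{dI}(x) > 0$ of Lemma 1.2 is met.

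The next step is to compute ${g}'_{dh\_dI}(x)$. In complete analogy with (\ref{eq24}) and (\ref{eq28}), I expect it to factor as the product of a manifestly positive radical expression, the factor $-(\sqrt x - 1)$, and an auxiliary polynomial–radical function $k_2(x)$ that is nonnegative for all $x>0$. As is done throughout Section 3 and in Propositions 4.1–4.2, I would confirm $k_2(x)\ge 0$ from its graph. Granting this, the sign of $\sqrt x - 1$ forces ${g}'_{dh\_dI}(x) > 0$ for $x < 1$ and ${g}'_{dh\_dI}(x) < 0$ for $x > 1$, so $g_{dh\_dI}$ attains its maximum on $(0,\infty)$ at $x = 1$.

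At $x = 1$ the quotient is indeterminate, since $u(1) = v(1) = 0$; and — exactly as in Propositions 4.1 and 4.2 — a single application of L'Hôpital's rule is still indeterminate, because ${u}'(1) = {v}'(1) = 0$ as well, so one passes to second derivatives. A direct computation yields ${u}''(1) = \tfrac38 - \tfrac{3}{16} = \tfrac{3}{16}$ and ${v}''(1) = \tfrac38 + \tfrac1{16} = \tfrac{7}{16}$, whence
\[
g_{dh\_dI}(1) = \frac{{u}''(1)}{{v}''(1)} = \frac{3/16}{7/16} = \frac{3}{7} = \max_{x>0} g_{dh\_dI}(x) .
\]
Taking $M = \tfrac37$ in the right-hand inequality of (\ref{eq5}) gives $D_{dh}(P\vert\vert Q) \le \tfrac37 D_{dI}(P\vert\vert Q)$, as claimed. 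The only steps that are not mere bookkeeping are the verification that $k_2 \ge 0$ (handled graphically, in keeping with the rest of the paper) and the second-order L'Hôpital evaluation at $x = 1$, which is dictated by the coincidence ${u}'(1) = {v}'(1) = 0$.
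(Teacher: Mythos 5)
Your proof is correct and follows essentially the same route as the paper's: the same ratio $g_{dh\_dI}(x)=f''_{dh}(x)/f''_{dI}(x)$, the same monotonicity argument via a graphically verified auxiliary function $k_2(x)\ge 0$, the same second-order L'H\^{o}pital evaluation giving the maximum value $\tfrac{3}{7}$ at $x=1$, and the same application of Lemma 1.2. The only cosmetic difference is that you cancel the common prefactor $\bigl[x^{3/2}(x+1)\sqrt{2x+2}\,\bigr]^{-1}$ before differentiating, so you take second derivatives of the simplified numerator and denominator (getting $\tfrac{3}{16}$ and $\tfrac{7}{16}$) rather than of $f''_{dh}$ and $f''_{dI}$ themselves (the paper's $\tfrac{3}{64}$ and $\tfrac{7}{64}$), which yields the same limit.
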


\begin{proof} Let us consider
\[
g_{dh\_dI} (x) = \frac{{f}''_{dh} (x)}{{f}''_{dI} (x)} = \frac{2(x^{3 / 2} +
1) - (x + 1)\sqrt {2x + 2} }{2\left[ {(x^{3 / 2} + 1) - \sqrt {2x(x + 1)} }
\right]}  , \quad x \ne 1
\]

\noindent
for all $x \in (0,\infty )$, where ${f}''_{dh} (x) = {\psi }''_{1 / 2} (x) -
{\varphi }''_{1 / 2} (x)$ and ${f}''_{dI} (x) = {\psi }''_{1 / 2} (x) -
{\varphi }''_1 (x)$. Calculating the first order derivative of the function
$g_{dh\_dI} (x)$ with respect to $x$, one gets
\begin{equation}
\label{eq31}
{g}'_{dh\_dI} (x) = - \frac{\left( {\sqrt x - 1} \right)\sqrt {2x + 2}
}{4\sqrt x (x + 1)\left[ {\left( {x^{3 / 2} + 1) - \sqrt {2x(x + 1)} }
\right)} \right]^2}\times k_2 (x)
\end{equation}

\noindent where
\begin{equation}
\label{eq32}
k_2 (x) = x^2 - x^{3 / 2} + 6x - \sqrt x + 2 - \left( {x + 1} \right)\left(
{\sqrt x + 1} \right)\sqrt {2x + 2} .
\end{equation}

The graph of the function $k_2 (x)$, $x > 0$ is given by
\newpage

\begin{figure}[htbp]
\centerline{\includegraphics[width=2.00in,height=2.00in]{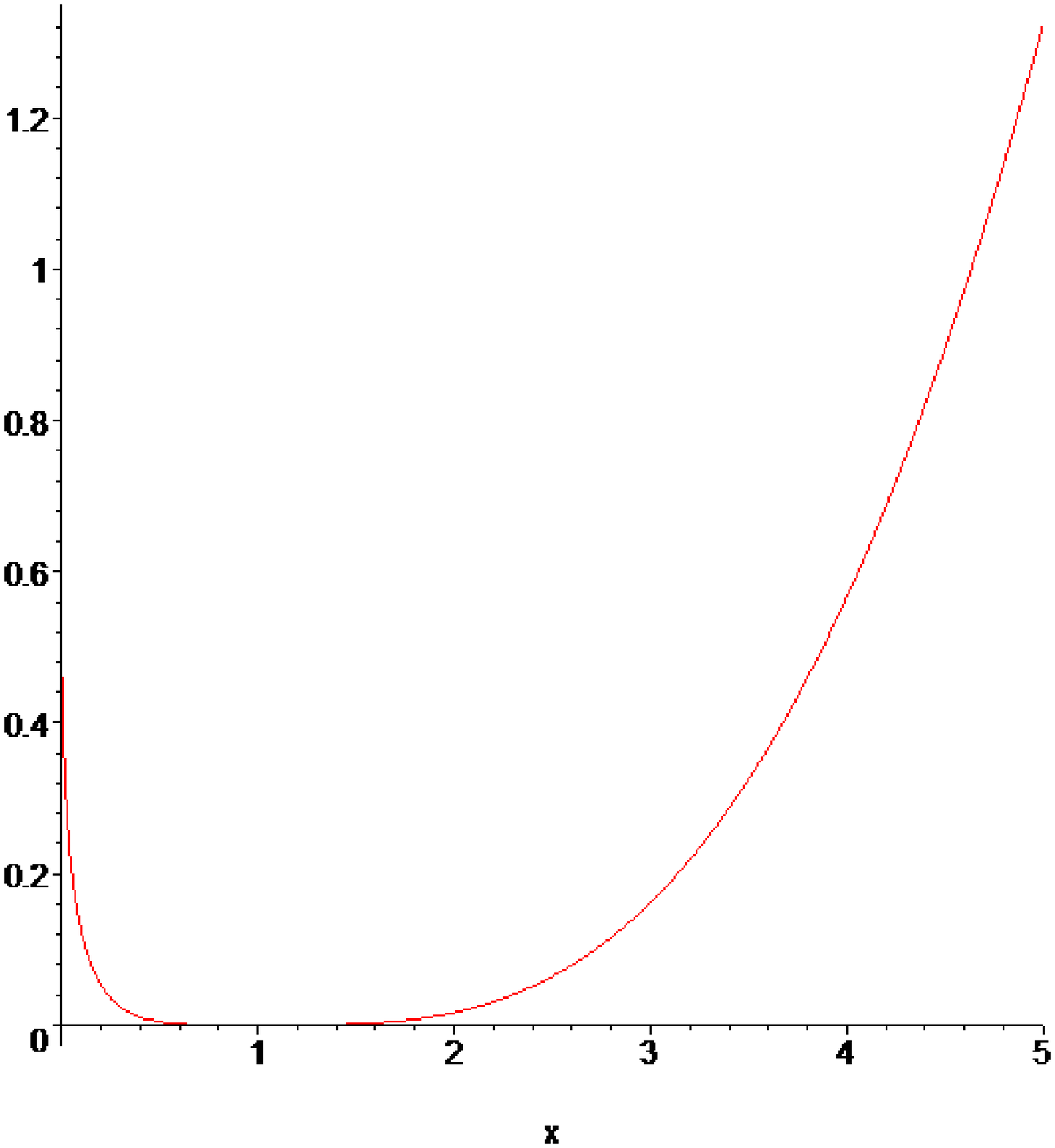}}
\label{fig5}
\end{figure}

We observe from the above graph that the function $k_2 (x) \ge 0$, $\forall
x > 0$. Thus from the from the expression (\ref{eq37}), we conclude that
\begin{equation}
\label{eq33}
{g}'_{dh\_dI} (x) = \begin{cases}
 { > 0,} & {x < 1} \\
 { < 0,} & {x > 1} \\
\end{cases}
\end{equation}

Let us calculate now $g_{dh\_dI} (1)$. We observe that
\[
\left. {g_{dh\_dI} (x)} \right|_{x = 1} = \left. {\frac{{f}''_{dh}
(x)}{{f}''_{dI} (x)}} \right|_{x = 1} = \left. {\frac{\left( {{f}''_{dh}
(x)} \right)^\prime }{\left( {{f}''_{dI} (x)} \right)^\prime }} \right|_{x =
1} = \mbox{indermination}.
\]

Calculating the second order derivatives of numerator and denominator of the
function $g_{dh\_dI} (x)$, we have
\begin{equation}
\label{eq34}
g_{dh\_dI} (1) = \left. {\frac{\left( {{f}''_{dh} (x)} \right)^{\prime
\prime }}{\left( {{f}''_{dI} (x)} \right)^{\prime \prime }}} \right|_{x = 1}
= \frac{\textstyle{3 \over {64}}}{\textstyle{7 \over {64}}} = \frac{3}{7}.
\end{equation}

By the application of the inequalities (\ref{eq5}) with (\ref{eq34}) we get the requires result.
\end{proof}

\begin{proposition} We have
\[
D_{dI} (P\vert \vert Q) \le \frac{7}{4}D_{hI} (P\vert \vert Q).
\]
\end{proposition}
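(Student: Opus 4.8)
The plan is to follow exactly the template established in Propositions 4.1--4.3, using Lemma 1.2 with the pair of generating functions whose ratio governs the inequality. Here we must compare $D_{dI}(P\Vert Q)$ with $D_{hI}(P\Vert Q)$, so I would set $g_{dI\_hI}(x) = f''_{dI}(x)/f''_{hI}(x)$, where $f''_{dI}(x) = \psi''_{1/2}(x) - \varphi''_1(x)$ (already computed in (\ref{eq16})) and $f''_{hI}(x) = \varphi''_{1/2}(x) - \varphi''_1(x)$, obtained from (\ref{eq10}) with $s=\tfrac12$ (divided by $8$) and (\ref{eq11}) with $s=0$. Explicitly $f''_{hI}(x) = \tfrac{1}{4x^{3/2}} - \tfrac{1}{2x(x+1)} = \tfrac{(x+1) - 2\sqrt{x}}{4x^{3/2}(x+1)} = \tfrac{(\sqrt{x}-1)^2}{4x^{3/2}(x+1)} \ge 0$, which is convenient since the denominator is a perfect square times positive factors. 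Both $f_{dI}$ and $f_{hI}$ are normalized at $x=1$, and Lemma 3.1 guarantees $f''_{dI}\ge 0$; one checks $f''_{hI}\ge 0$ directly as above. Thus Lemma 1.2 applies once we bound the ratio $g_{dI\_hI}$.

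Next I would show that $g_{dI\_hI}(x)$ attains its extreme value at $x=1$ and that this value is $\tfrac{7}{4}$, so that $g_{dI\_hI}(x) \le \tfrac74$ on $(0,\infty)$. Concretely, after clearing the common factor $\tfrac{1}{4x^{3/2}(x+1)}$ from both numerator and denominator, one gets
\[
g_{dI\_hI}(x) = \frac{2(x^{3/2}+1) - 2\sqrt{2x(x+1)}}{(\sqrt{x}-1)^2},
\]
using $f''_{dI}(x) = \tfrac{1}{x^{3/2}(x+1)\sqrt{2x+2}}\bigl[\tfrac{x^{3/2}+1}{2} - \sqrt{x}\sqrt{\tfrac{x+1}{2}}\bigr]$ from (\ref{eq16}) and simplifying. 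I would then differentiate $g_{dI\_hI}$, factor out $(\sqrt{x}-1)$, and exhibit the remaining factor as (positive rational function) $\times\, k_3(x)$ for an auxiliary function $k_3(x)$, verifying $k_3(x)\ge 0$ for all $x>0$ via its graph, exactly as $k_1$ and $k_2$ are handled in the preceding propositions. This yields $g'_{dI\_hI}(x) > 0$ for $x<1$ and $<0$ for $x>1$, so the maximum is at $x=1$.

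The value $g_{dI\_hI}(1)$ is a $0/0$ indeterminate (both $f''_{dI}$ and $f''_{hI}$ vanish at $x=1$ to second order, since each carries a factor behaving like $(\sqrt x - 1)^2$ near $x=1$), so I would compute it by two applications of l'Hôpital, i.e.\ as $\bigl(f''_{dI}\bigr)''(1)/\bigl(f''_{hI}\bigr)''(1)$. Evaluating these fourth derivatives of the original generating functions at $x=1$ should give a ratio of $\tfrac74$; I expect the numerator to work out to $\tfrac{7}{64}$ (consistent with (\ref{eq34}), where $(f''_{dI})''(1)=\tfrac{7}{64}$) and the denominator to $\tfrac{1}{16}$, giving $\tfrac{7/64}{1/16}=\tfrac74$. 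Combining $g_{dI\_hI}(x)\le \tfrac74$ with Lemma 1.2 (the upper bound $M=\tfrac74$) gives $C_{f_{dI}}(P\Vert Q) \le \tfrac74\,C_{f_{hI}}(P\Vert Q)$, which is precisely $D_{dI}(P\Vert Q)\le \tfrac74 D_{hI}(P\Vert Q)$.

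The main obstacle, as in the other propositions, is the monotonicity step: one must produce the correct auxiliary function $k_3(x)$ after differentiating the messy ratio involving the square root $\sqrt{2x+2}$, and then justify $k_3(x)\ge 0$. The paper does this by plotting the graph, so rigor here rests on that graphical check; the algebra of extracting $k_3$ and of the double-l'Hôpital limit at $x=1$ is routine but error-prone because of the nested radicals. A secondary point worth care is confirming that the common factor cancelled from numerator and denominator is genuinely positive on all of $(0,\infty)$ and that the sign of $f''_{hI}$ is correctly $\ge 0$, so that the hypotheses of Lemma 1.2 (namely $f''_{hI}>0$ away from $x=1$) are met.
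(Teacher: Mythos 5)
Your proposal follows essentially the same route as the paper's proof: the same ratio $g_{dI\_hI}(x)=f''_{dI}(x)/f''_{hI}(x)$ with $f''_{dI}={\psi}''_{1/2}-{\varphi}''_{1}$ and $f''_{hI}={\varphi}''_{1/2}-{\varphi}''_{1}$, the same monotonicity-about-$x=1$ argument, the same double-l'H\^{o}pital evaluation $g_{dI\_hI}(1)=\tfrac{7/64}{1/16}=\tfrac74$, and the same appeal to Lemma 1.2. Two minor points: your simplified expression for the ratio drops a factor of $\sqrt{2x+2}$ from the denominator (the correct form is $\frac{2\left(x^{3/2}+1-\sqrt{2x(x+1)}\right)}{\sqrt{2x+2}\,\left(\sqrt{x}-1\right)^{2}}$), and the auxiliary function produced by differentiating turns out to be the same $k_2(x)$ already introduced in Proposition 4.3, so no new graphical check is needed.
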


\begin{proof} Let us consider
\[
g_{dI\_hI} (x) = \frac{{f}''_{dI} (x)}{{f}''_{hI} (x)} = \frac{2\left( {x^{3
/ 2} + 1 - \sqrt {2x(x + 1)} } \right)}{\sqrt {2x + 2} \left( {\sqrt x - 1}
\right)^2} , \quad  x \ne 1,
\]

\noindent
for all $x \in (0,\infty )$, where ${f}''_{dI} (x) = {\psi }''_{1 / 2} (x) -
{\varphi }''_1 (x)$ and ${f}''_{hI} (x) = {\varphi }''_{1 / 2} (x) -
{\varphi }''_1 (x)$. Calculating the first order derivative of the function
$g_{dI\_hd} (x)$ with respect to $x$, one gets
\begin{equation}
\label{eq35}
{g}'_{dI\_hI} (x) = - \frac{\left( {\sqrt x - 1} \right)\sqrt x }{x(x +
1)\sqrt {2x + 2} \left( {2\sqrt x - x - 1} \right)^2}\times k_2 (x),  \quad x \ne 1.
\end{equation}

\noindent
where $k_2 (x)$, $x > 0$ is as given by (\ref{eq38}). Since $k_2 (x) \ge 0$,
$\forall x > 0$. Thus from the from the expression (\ref{eq35}), we conclude that
\begin{equation}
\label{eq36}
{g}'_{dI\_hI} (x) = \begin{cases}
 { > 0,} & {x < 1} \\
 { < 0,} & {x > 1} \\
\end{cases}
\end{equation}

Let us calculate now $g_{dI\_hI} (1)$. We observe that
\[
\left. {g_{dI\_hI} (x)} \right|_{x = 1} = \left. {\frac{{f}''_{dI}
(x)}{{f}''_{dI} (x)}} \right|_{x = 1} = \left. {\frac{\left( {{f}''_{dI}
(x)} \right)^\prime }{\left( {{f}''_{hI} (x)} \right)^\prime }} \right|_{x =
1} = \mbox{indermination}.
\]

Calculating the second order derivatives of numerator and denominator of the
function $g_{dI\_hI} (x)$, we have
\begin{equation}
\label{eq37}
g_{dI\_hI} (1) = \left. {\frac{\left( {{f}''_{dI} (x)} \right)^{\prime
\prime }}{\left( {{f}''_{hI} (x)} \right)^{\prime \prime }}} \right|_{x = 1}
= \frac{\textstyle{7 \over {64}}}{\textstyle{1 \over {16}}} = \frac{7}{4}.
\end{equation}

By the application of the inequalities (\ref{eq5}) with (\ref{eq37}) we get the required result.
\end{proof}

\begin{proposition}  We have
\[
D_{Th} (P\vert \vert Q) \le \frac{4}{3}D_{Td} (P\vert \vert Q).
\]
\end{proposition}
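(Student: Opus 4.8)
The plan is to follow the same template used in Propositions 4.1--4.4: express the ratio of the two relevant generating functions $f''_{Th}(x)/f''_{Td}(x)$, show it is monotone on each side of $x=1$, and evaluate its limit at $x=1$ by L'H\^opital. Concretely, I would set
\[
g_{Th\_Td}(x) = \frac{f''_{Th}(x)}{f''_{Td}(x)}, \quad x \ne 1,
\]
where $f''_{Th}(x) = \psi''_0(x) - \varphi''_{1/2}(x)$ (with $\varphi''_{1/2}$ taken from (\ref{eq10}) with the appropriate factor, since $h = \tfrac18\zeta_{1/2}$) and $f''_{Td}(x) = \psi''_0(x) - \tfrac14\psi''_{1/2}(x)$, using that $T = \xi_0$ and $4d = \xi_{1/2}$. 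Both denominators are positive on $(0,\infty)\setminus\{1\}$ by Lemma 3.1 (the convexity of $D_{Td}$) and by the chain $h \le 4d \le T$ from (\ref{eq9}), so the ratio is well-defined and positive.

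Next I would compute $g'_{Th\_Td}(x)$ and, exactly as in the earlier propositions, factor out $(\sqrt x - 1)$ (which carries the sign change across $x=1$) together with manifestly positive terms, leaving a residual factor --- call it $k_3(x)$ --- whose nonnegativity on $(0,\infty)$ is the crux. I expect $k_3$ to be another algebraic-plus-radical expression of the same flavour as $k_1$ in (\ref{eq25}) and $k_2$ in (\ref{eq32}), so following the paper's style I would establish $k_3(x) \ge 0$ by exhibiting its graph (or, if one prefers a calculation, by squaring to clear the single radical $\sqrt{2x+2}$ and checking the resulting polynomial inequality, e.g.\ via the substitution $x = t^2$). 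From $k_3 \ge 0$ I conclude $g'_{Th\_Td}(x) > 0$ for $x < 1$ and $< 0$ for $x > 1$, hence $g_{Th\_Td}$ attains its maximum at $x = 1$.

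Finally, since $g_{Th\_Td}(1)$ is a $0/0$ indeterminate form, I would evaluate it by applying L'H\^opital twice (differentiating numerator and denominator of $f''_{Th}/f''_{Td}$), which reduces to the ratio of the second derivatives of $f''_{Th}$ and $f''_{Td}$ at $x=1$; the expected value is $\tfrac43$. Then setting $M = \tfrac43$ and $m = \min_x g_{Th\_Td}(x)$ (a positive constant, irrelevant here), Lemma 1.2 applied to $f_1 = f_{Th}$, $f_2 = f_{Td}$ gives $C_{f_{Th}} \le \tfrac43\, C_{f_{Td}}$, i.e.\ $D_{Th}(P\|Q) \le \tfrac43 D_{Td}(P\|Q)$, as desired.

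The main obstacle is verifying $k_3(x) \ge 0$ for all $x > 0$: the earlier propositions dispatch their analogous $k_i$ by appeal to a plotted graph, and I would do the same here, but a rigorous alternative is to isolate the lone square-root term, square, and reduce to a polynomial inequality in $\sqrt x$ that can be checked by grouping into sums of squares. A minor secondary point is bookkeeping the correct normalizing constants when reading $\psi''_s$ and $\varphi''_s$ off (\ref{eq10})--(\ref{eq11}) --- one must remember $h = \tfrac18 \zeta_{1/2}$ and $4d = \xi_{1/2}$, and the factor on $\psi''_{1/2}$ in $f_{Td}$ --- but this is the same routine the proof of Lemma 3.1 already carried out in parts (ii) and (iv).
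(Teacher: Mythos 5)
Your proposal follows the paper's proof of this proposition essentially verbatim: form the ratio $g_{Th\_Td}(x)=f''_{Th}(x)/f''_{Td}(x)$, show via a factored first derivative (with a graph-verified residual factor $k_3(x)\ge 0$ carrying a $(\sqrt x-1)$ sign change) that the ratio attains its maximum at $x=1$, evaluate the indeterminate value there by a double L'H\^opital to get $\tfrac{4}{3}$, and conclude by Lemma 1.2. The one slip is the normalization of $f''_{Td}$: since $D_{Td}=T-4d$ and $4f_d=\psi_{1/2}$ (i.e.\ $4d=\xi_{1/2}$), the correct expression is $f''_{Td}=\psi''_0-\psi''_{1/2}$ with no factor $\tfrac14$ on $\psi''_{1/2}$ --- precisely the bookkeeping point you flagged, and it does not affect the structure of the argument.
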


\begin{proof} Let us consider
\[
g_{Th\_Td} (x) = \frac{{f}''_{Th} (x)}{{f}''_{Td} (x)} = \frac{\left( {\sqrt
x - 1} \right)^2\left( {x + \sqrt x + 1} \right)\sqrt {2x + 2} }{(x^2 +
1)\sqrt {2x + 2} - 2\sqrt x \left( {x^{3 / 2} + 1} \right)} , \quad  x \ne 1
\]

\noindent
for all $x \in (0,\infty )$, where ${f}''_{Th} (x) = {\psi }''_0 (x) -
{\varphi }''_{1 / 2} (x)$ and ${f}''_{Td} (x) = {\psi }''_0 (x) - {\psi
}''_{1 / 2} (x)$. Calculating the first order derivative of the function
$g_{Th\_Td} (x)$ with respect to $x$, one gets
\begin{equation}
\label{eq38}
{g}'_{Th\_Td} (x) = \frac{ - 2x^2(x + 1)\left( {\sqrt x - 1} \right)\sqrt
{2x + 2} }{2\sqrt x \left[ {2\sqrt x \left( {x^{3 / 2} + 1} \right) - (x^2 +
1)\sqrt {x + 1} } \right]^2}\times k_3 (x)
\end{equation}

\noindent where
\begin{align}
k_3 (x) = 2(x^2 + 1)(x^2 + x + 1) &+ 2\sqrt x (x + 1)(x^2 + 7x + 1)\notag\\
& - \,(x + 1)(x^2 + 4x + 1)\left( {\sqrt x + 1} \right)\sqrt {2x + 2} .\notag
\end{align}

The graph of the function $k_3 (x), x > 0$ is given by
\newpage

\begin{figure}[htbp]
\centerline{\includegraphics[width=2.00in,height=2.00in]{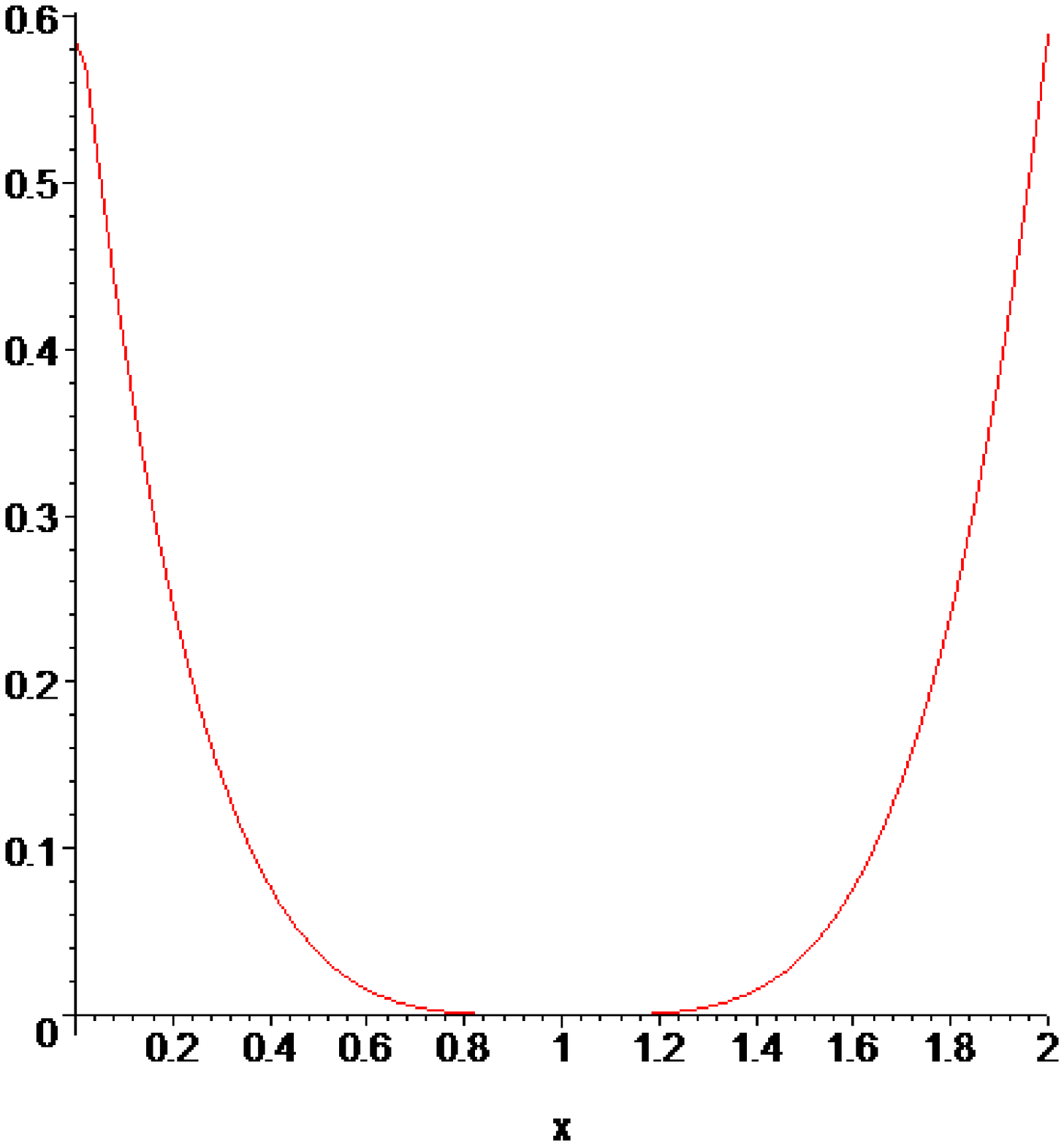}}
\label{fig6}
\end{figure}

We observe from the above graph that the function $k_3 (x) \ge 0$, $\forall
x > 0$. Thus from the from the expression (\ref{eq38}), we conclude that
\begin{equation}
\label{eq39}
{g}'_{Th\_Td} (x) = \begin{cases}
 { > 0,} & {x < 1} \\
 { < 0,} & {x > 1} \\
\end{cases}
\end{equation}

Let us calculate now $g_{Th\_Td} (1)$. We observe that
\[
\left. {g_{Th\_Td} (x)} \right|_{x = 1} = \left. {\frac{{f}''_{Th}
(x)}{{f}''_{Td} (x)}} \right|_{x = 1} = \left. {\frac{\left( {{f}''_{Th}
(x)} \right)^\prime }{\left( {{f}''_{Td} (x)} \right)^\prime }} \right|_{x =
1} = \mbox{indermination}.
\]

Calculating the second order derivatives of numerator and denominator of the function $g_{Th\_Td} (x)$, we have
\begin{equation}
\label{eq40}
g_{Th\_Td} (1) = \left. {\frac{\left( {{f}''_{Th} (x)} \right)^{\prime
\prime }}{\left( {{f}''_{Td} (x)} \right)^{\prime \prime }}} \right|_{x = 1}
= \frac{\textstyle{3 \over {16}}}{\textstyle{9 \over {64}}} = \frac{4}{3}.
\end{equation}

By the application of the inequalities (\ref{eq5}) with (\ref{eq40}) we get the required result.
\end{proof}

\begin{proposition} We have
\[
D_{Td} (P\vert \vert Q) \le \frac{3}{16}D_{\Psi \Delta } (P\vert \vert Q).
\]
\end{proposition}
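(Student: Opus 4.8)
The plan is to follow the same template used in all the preceding propositions, namely to realize both difference measures as Csisz\'ar $f$-divergences and to bound the ratio of the second derivatives of the corresponding generating functions via Lemma 1.2. First I would write
\[
D_{Td}(P\vert\vert Q)=\sum_{i=1}^n q_i f_{Td}\!\left(\tfrac{p_i}{q_i}\right),\qquad
D_{\Psi\Delta}(P\vert\vert Q)=\sum_{i=1}^n q_i f_{\Psi\Delta}\!\left(\tfrac{p_i}{q_i}\right),
\]
with $f_{Td}=f_T-4f_d$ (already introduced in Lemma 3.1(ii)) and $f_{\Psi\Delta}(x)=\tfrac1{16}f_\Psi(x)-\tfrac14 f_\Delta(x)$. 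Using the second-derivative formulas \eqref{eq10} and \eqref{eq11} with the appropriate parameter values ($s=2$ for $f_\Psi$, $s=-1$ for $f_\Delta$, $s=1$ for $f_T$, $s=\tfrac12$ for $f_d$), I would compute
\[
{f}''_{Td}(x)=\frac{m_2(x)}{x^2(x+1)\sqrt{2x+2}},\qquad
{f}''_{\Psi\Delta}(x)=\frac{x^3+1}{8x^3}-\frac{2}{(x+1)^3},
\]
the former already obtained in \eqref{eq13}, and then form the quotient
\[
g_{Td\_\Psi\Delta}(x)=\frac{{f}''_{Td}(x)}{{f}''_{\Psi\Delta}(x)},\qquad x\neq 1 .
\]

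Next I would study the monotonicity of $g_{Td\_\Psi\Delta}$ on $(0,\infty)$. Differentiating once, I expect the derivative to factor, as in \eqref{eq24}, \eqref{eq28}, \eqref{eq31}, \eqref{eq35} and \eqref{eq38}, as a product of an explicitly signed elementary factor (carrying a $(\sqrt x-1)$ and positive denominators) times an auxiliary function $k_4(x)$ whose nonnegativity on $(0,\infty)$ is read off from a plotted graph. Granting $k_4(x)\ge 0$, this yields ${g}'_{Td\_\Psi\Delta}(x)>0$ for $x<1$ and $<0$ for $x>1$, so $g_{Td\_\Psi\Delta}$ attains its maximum at $x=1$, where it is an indeterminate $0/0$ that must be resolved by l'H\^opital applied twice, i.e.\ by the ratio of the second derivatives of numerator and denominator evaluated at $x=1$. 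That value should come out to $\tfrac{3}{16}$ (consistently, the numerator contributes $\tfrac9{64}$ as in \eqref{eq40} and the denominator ${f}''_{\Psi\Delta}$ contributes $\tfrac34$). Hence $0\le g_{Td\_\Psi\Delta}(x)\le\tfrac3{16}$ for all $x>0$, and Lemma 1.2 with $M=\tfrac3{16}$, $m=0$ gives $C_{f_{Td}}\le\tfrac3{16}C_{f_{\Psi\Delta}}$, which is the claim.

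The main obstacle is the same as throughout the paper: verifying that the auxiliary rational-plus-radical function $k_4(x)$ is nonnegative for all $x>0$, together with the bookkeeping needed to put ${g}'_{Td\_\Psi\Delta}$ into the clean factored form that isolates it. Since ${f}''_{\Psi\Delta}$ itself vanishes at $x=1$ (both $D_{Td}$ and $D_{\Psi\Delta}$ vanish to second order there), one must be careful that the factor $(\sqrt x-1)$ appearing in the numerator of ${g}'_{Td\_\Psi\Delta}$ is the right power so that the sign analysis in \eqref{eq39}-type display is valid and the limit at $x=1$ is genuinely the ratio of second derivatives; I would double-check the Taylor expansions of ${f}''_{Td}$ and ${f}''_{\Psi\Delta}$ about $x=1$ to confirm both behave like $\text{const}\cdot(x-1)^2+\cdots$ so that the quotient is continuous with value $\tfrac3{16}$ there. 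Once $k_4\ge 0$ is accepted from the graph, the remaining steps are routine and mirror Propositions 4.1--4.6 verbatim.
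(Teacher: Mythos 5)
Your proposal follows exactly the paper's own argument: form $g_{Td\_\Psi\Delta}=f''_{Td}/f''_{\Psi\Delta}$ with $f''_{\Psi\Delta}=\psi''_2-\psi''_{-1}=\tfrac{x^3+1}{8x^3}-\tfrac{2}{(x+1)^3}$, show via the graph-verified auxiliary function $k_4\ge 0$ that the quotient increases on $(0,1)$ and decreases on $(1,\infty)$, resolve the $0/0$ limit at $x=1$ by a double application of l'H\^opital to get the supremum $\tfrac{3}{16}$, and invoke Lemma 1.2. This matches the paper's proof (including its reliance on the plotted graph for $k_4$), so no further comparison is needed.
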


\begin{proof} Let us consider
\begin{align}
g_{Td\_\Psi \Delta } (x) = \frac{{f}''_{Td} (x)}{{f}''_{\Psi \Delta } (x)} & =
\frac{2x(x + 1)^2\left[ {(x^2 + 1)(2x + 2)^{3 / 2} - 4\sqrt x (x + 1)(x^{3 /
2} + 1)} \right]}{(x - 1)^2(2x + 2)^{3 / 2}\left( {x^4 + 5x^3 + 12x^2 + 5x +
1} \right)}\notag\\
& = \frac{2x(x + 1)^2\left[ {(x^2 + 1)\sqrt {2x + 2} - 2\sqrt x \left( {x^{3
/ 2} + 1} \right)} \right]}{(x - 1)^2\sqrt {2x + 2} \left( {x^4 + 5x^3 +
12x^2 + 5x + 1} \right)}, \quad x \ne 1 \notag
\end{align}

\noindent
for all $x \in (0,\infty )$, where ${f}''_{Td} (x) = {\psi }''_0 (x) - {\psi }''_{1 / 2} (x)$ and ${f}''_{\Psi \Delta } (x) = {\psi }''_2 (x) - {\psi }''_{ - 1} (x)$. Calculating the first order derivative of the function $g_{Td\_\Psi \Delta } (x)$ with respect to $x$, one gets
\begin{equation}
\label{eq41}
{g}'_{Td\_\Psi \Delta } (x) = - \frac{2}{x(x - 1)^3(2x + 2)^{5 / 2}(x +
1)^3\left( {x^4 + 5x^3 + 12x^2 + 5x + 1} \right)^2}\times k_4 (x),
\end{equation}

\noindent
where $k_4 (x)$ is given by
\begin{align}
k_4 (x) = - 12 &\sqrt x \left( {\sqrt x + 1} \right)\left( {x + 1}
\right)^2 \left[ {(x + 1)(x^6 + 4x^5 + 6x^4 + 18x^3 + 6x^2 + 4x + 1)}
\right.\notag\\
& \left. { - x^{(1 / 2)}(x^2 + 1)(x^4 + 3x^3 + 3x + 1)} \right] +\notag\\
& + \,(2x + 2)^{(5 / 2)}\left( {1 + 4x + 10x^2 + 52x^3 + 58x^4 + 52x^5 +
10x^6 + 4x^7 + x^8} \right).\notag
\end{align}

The graph of the function $k_4 (x)$, $x > 0$ is given by

\begin{figure}[htbp]
\centerline{\includegraphics[width=2.00in,height=2.00in]{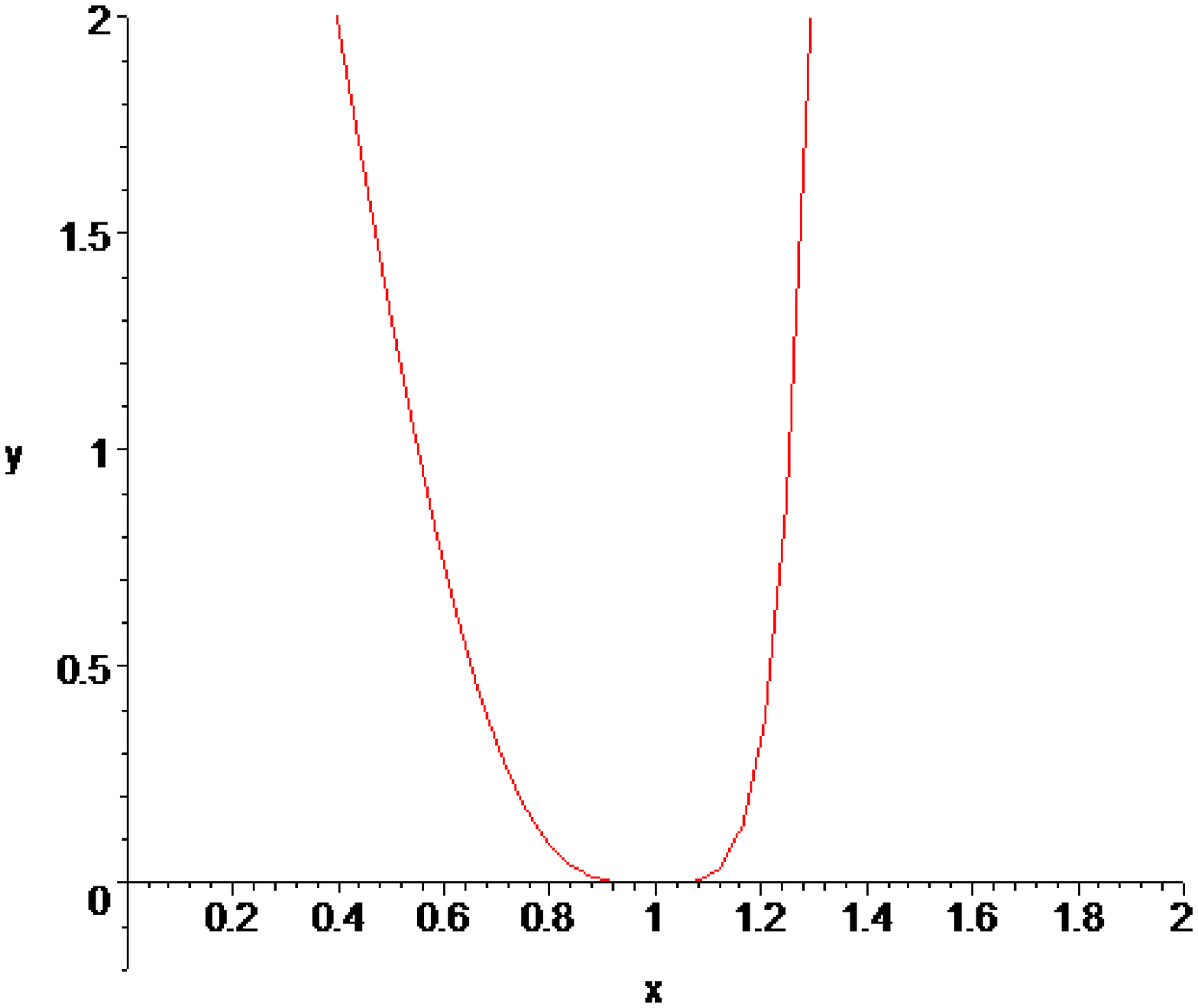}}
\label{fig7}
\end{figure}

We observe from the above graph that the function $k_4 (x) \ge 0$, $\forall
x > 0$. Thus from the from the expression (\ref{eq41}), we conclude that
\begin{equation}
\label{eq42}
{g}'_{Td\_\Psi \Delta } (x) = \begin{cases}
 { > 0,} & {x < 1} \\
 { < 0,} & {x > 1} \\
\end{cases}
\end{equation}

Let us calculate now $g_{Td\_\Psi \Delta } (1)$. We observe that
\[
\left. {g_{Td\_\Psi \Delta } (x)} \right|_{x = 1} = \left. {\frac{{f}''_{Td}
(x)}{{f}''_{\Psi \Delta } (x)}} \right|_{x = 1} = \left. {\frac{\left(
{{f}''_{Td} (x)} \right)^\prime }{\left( {{f}''_{\Psi \Delta } (x)}
\right)^\prime }} \right|_{x = 1} = \mbox{indermination}.
\]

Calculating the second order derivatives of numerator and denominator of the
function $g_{Td\_\Psi \Delta } (x)$, we have
\begin{equation}
\label{eq43}
g_{Td} (1) = \left. {\frac{\left( {{f}''_{Td} (x)} \right)^{\prime \prime
}}{\left( {{f}''_{\Psi \Delta } (x)} \right)^{\prime \prime }}} \right|_{x =
1} = \frac{\textstyle{9 \over 4}}{12} = \frac{3}{16}.
\end{equation}

By the application of the inequalities (\ref{eq5}) with (\ref{eq43}) we get the required result.
\end{proof}

\begin{proposition} We have
\[
D_{\Psi h} (P\vert \vert Q) \le \frac{12}{11}D_{\Psi d} (P\vert \vert Q).
\]
\end{proposition}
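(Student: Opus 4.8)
The plan is to follow the pattern of Propositions 4.1--4.6: realize $D_{\Psi h} (P\vert \vert Q)$ and $D_{\Psi d} (P\vert \vert Q)$ as Csisz\'{a}r $f$-divergences, study the ratio of the second derivatives of the two generating functions, show it is maximized at $x = 1$, and invoke Lemma 1.2. First I would write
\[
D_{\Psi h} (P\vert \vert Q) = \frac{1}{16}\Psi (P\vert \vert Q) - h(P\vert \vert Q) = \sum\limits_{i = 1}^n {q_i f_{\Psi h} \left( {\frac{p_i }{q_i }} \right)} , \quad D_{\Psi d} (P\vert \vert Q) = \sum\limits_{i = 1}^n {q_i f_{\Psi d} \left( {\frac{p_i }{q_i }} \right)} ,
\]
with $f_{\Psi h} (x) = \frac{1}{16}f_\Psi (x) - f_h (x)$ and $f_{\Psi d} (x) = \frac{1}{16}f_\Psi (x) - 4f_d (x)$. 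By (\ref{eq10})--(\ref{eq11}) (equivalently by (\ref{eq12}) and (\ref{eq15})),
\[
{f}''_{\Psi h} (x) = {\psi }''_2 (x) - {\varphi }''_{1 / 2} (x) = \frac{x^3 + 1}{8x^3} - \frac{1}{4x^{3 / 2}} = \frac{\left( {x^{3 / 2} - 1} \right)^2}{8x^3} ,
\]
while ${f}''_{\Psi d} (x) = {\psi }''_2 (x) - {\psi }''_{1 / 2} (x)$ is exactly the quantity computed in (\ref{eq12}); hence
\[
g_{\Psi h\_\Psi d} (x) = \frac{{f}''_{\Psi h} (x)}{{f}''_{\Psi d} (x)} = \frac{\left( {x^{3 / 2} - 1} \right)^2(x + 1)\sqrt {2x + 2} }{(x^3 + 1)(x + 1)\sqrt {2x + 2} - 4x^{3 / 2}(x^{3 / 2} + 1)} , \quad x \ne 1 ,
\]
for all $x \in (0,\infty )$, with $g_{\Psi h\_\Psi d} (x) \to 1$ as $x \to 0^{+}$ and as $x \to \infty $.

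Next I would differentiate $g_{\Psi h\_\Psi d}$; in analogy with Propositions 4.1--4.6 I expect that, after clearing the radicals, the derivative takes the form
\[
{g}'_{\Psi h\_\Psi d} (x) = \frac{ - \left( {\sqrt x - 1} \right)\times (\mbox{positive factor})}{\left[ {(x^3 + 1)(x + 1)\sqrt {2x + 2} - 4x^{3 / 2}(x^{3 / 2} + 1)} \right]^2}\times k_5 (x) ,
\]
for a definite auxiliary function $k_5 (x)$ (a polynomial in $x$, $\sqrt x$ and $\sqrt {2x + 2} $). The main obstacle is to carry out this differentiation and simplification without error and then to verify $k_5 (x) \ge 0$ for all $x > 0$, which I would do graphically, exactly as was done for $k_1 ,\dots ,k_4 $; alternatively, the inequality $g_{\Psi h\_\Psi d} (x) \le \frac{12}{11}$ can, after squaring, be reduced to the nonnegativity on $(0,\infty )$ of
\[
(x + 1)^3\left( {x^3 + 22x^{3 / 2} + 1} \right)^2 - 1152\,x^3\left( {x^{3 / 2} + 1} \right)^2 ,
\]
a polynomial in $\sqrt x$ having a double root at $x = 1$. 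Granting $k_5 (x) \ge 0$, the displayed form of ${g}'_{\Psi h\_\Psi d}$ shows it is positive for $x < 1$ and negative for $x > 1$, so $g_{\Psi h\_\Psi d}$ increases on $(0,1)$, decreases on $(1,\infty )$, and attains its maximum at $x = 1$.

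Finally, since $g_{\Psi h\_\Psi d} (1)$ is a $0/0$ indetermination, I would evaluate it by l'H\^{o}pital's rule as $\left( {{f}''_{\Psi h} } \right)^{\prime \prime }(1) / \left( {{f}''_{\Psi d} } \right)^{\prime \prime }(1)$; a short computation using (\ref{eq10}) and (\ref{eq11}) gives $\left( {{f}''_{\Psi h} } \right)^{\prime \prime }(1) = \textstyle{9 \over {16}}$ and $\left( {{f}''_{\Psi d} } \right)^{\prime \prime }(1) = \textstyle{{33} \over {64}}$, so
\[
g_{\Psi h\_\Psi d} (1) = \left. {\frac{\left( {{f}''_{\Psi h} (x)} \right)^{\prime \prime }}{\left( {{f}''_{\Psi d} (x)} \right)^{\prime \prime }}} \right|_{x = 1} = \frac{\textstyle{9 \over {16}}}{\textstyle{{33} \over {64}}} = \frac{12}{11} .
\]
Hence $g_{\Psi h\_\Psi d} (x) \le \frac{12}{11}$ on $(0,\infty )$, and applying the inequalities (\ref{eq5}) with $f_1 = f_{\Psi h} $, $f_2 = f_{\Psi d} $ and $M = \frac{12}{11}$ gives $D_{\Psi h} (P\vert \vert Q) \le \frac{12}{11}D_{\Psi d} (P\vert \vert Q)$, which is the required result.
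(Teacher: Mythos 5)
Your proposal follows essentially the same route as the paper's own proof: the same $f$-divergence decomposition, the same ratio $g_{\Psi h\_\Psi d}(x)={f}''_{\Psi h}(x)/{f}''_{\Psi d}(x)$ (your form agrees with the paper's after cancelling a factor of $2$), the same monotone-up-then-down argument with the maximum $\frac{12}{11}$ at $x=1$ obtained by l'H\^{o}pital (your values $\tfrac{9}{16}$ and $\tfrac{33}{64}$ are correct and give the same ratio as the paper's $9$ and $\tfrac{33}{4}$), and the same appeal to Lemma 1.2 — with the remaining nonnegativity of the auxiliary function left to graphical verification exactly as the paper does for its $k_5$. The only genuinely different element is your optional direct reduction of $g\le\frac{12}{11}$ to a single polynomial inequality in $\sqrt{x}$, which is a correct and arguably cleaner endgame but is likewise not verified in closed form.
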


\begin{proof} Let us consider
\[
g_{\Psi h\_\Psi d} (x) = \frac{{f}''_{\Psi h} (x)}{{f}''_{\Psi d} (x)} =
\frac{\left( {x^{3 / 2} - 1} \right)^2(2x + 2)^{3 / 2}}{(x^3 + 1)(2x + 2)^{3
/ 2} - 8x^{3 / 2}(x^{3 / 2} + 1)} , \quad x \ne 1
\]

\noindent
for all $x \in (0,\infty )$, where ${f}''_{\Psi h} (x) = {\psi }''_2 (x) -
{\varphi }''_{1 / 2} (x)$ and ${f}''_{\Psi d} (x) = {\psi }''_2 (x) - {\psi
}''_{1 / 2} (x)$. Calculating the first order derivative of the function
$g_{\Psi h\_\Psi d} (x)$ with respect to $x$, one gets
\begin{equation}
\label{eq44}
{g}'_{\Psi h\_\Psi d} (x) = - \frac{3\left( {\sqrt x - 1} \right)\left( {x +
\sqrt x + 1} \right)\sqrt {2x + 2} }{\sqrt x \left[ {(x^3 + 1)(2x + 2)^{3 /
2} - 8x^3(x^{3 / 2} + 1)} \right]^2}\times k_5 (x)
\end{equation}

\noindent where
\[
k_5 (x) = \left[ {8\left( {x^4 + 3x^{5 / 2} + 3x^{3 / 2} + 1} \right) -
\left( {x^{3 / 2} + 1} \right)\left( {2x + 2} \right)^{5 / 2}} \right].
\]

The graph of the function $k_5 (x)$, $x > 0$ is given by

\begin{figure}[htbp]
\centerline{\includegraphics[width=2.00in,height=2.00in]{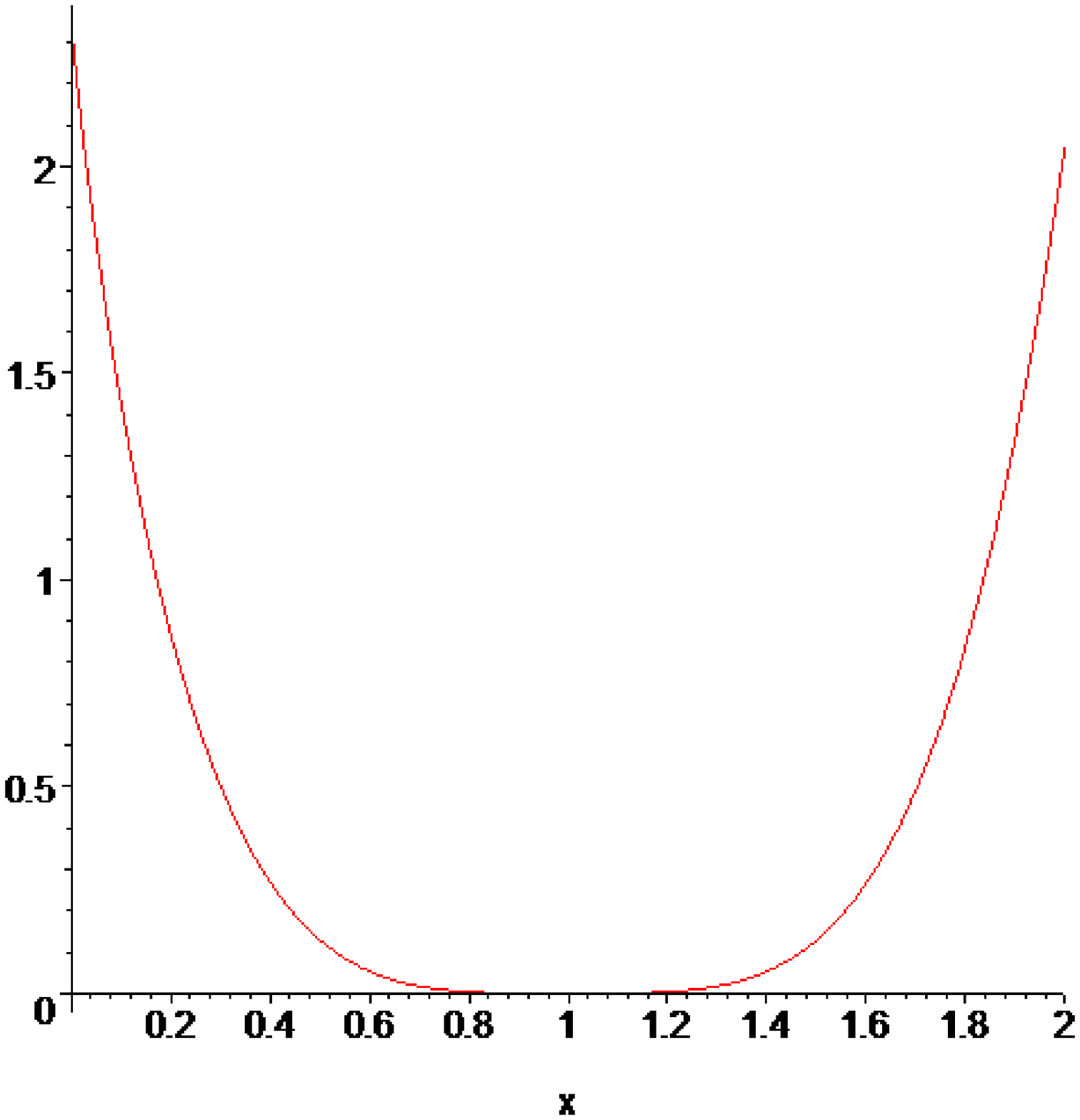}}
\label{fig8}
\end{figure}

We observe from the above graph that the function $k_5 (x) \ge 0$, $\forall
x > 0$. Thus from the from the expression (\ref{eq44}), we conclude that
\begin{equation}
\label{eq45}
{g}'_{\Psi h\_\Psi d} (x) = \begin{cases}
 { > 0,} & {x < 1} \\
 { < 0,} & {x > 1} \\
\end{cases}
\end{equation}

Let us calculate now $g_{\Psi h\_\Psi d} (1)$. We observe that
\[
\left. {g_{\Psi h\_\Psi d\Delta } (x)} \right|_{x = 1} = \left.
{\frac{{f}''_{\Psi h} (x)}{{f}''_{\Psi d} (x)}} \right|_{x = 1} = \left.
{\frac{\left( {{f}''_{\Psi h} (x)} \right)^\prime }{\left( {{f}''_{\Psi d}
(x)} \right)^\prime }} \right|_{x = 1} = \mbox{indermination}.
\]

Calculating the second order derivatives of numerator and denominator of the
function $g_{\Psi h\_\Psi d} (x)$, we have
\begin{equation}
\label{eq46}
g_{\Psi h\_\Psi d} (1) = \left. {\frac{\left( {{f}''_{\Psi h} (x)}
\right)^{\prime \prime }}{\left( {{f}''_{\Psi d} (x)} \right)^{\prime \prime
}}} \right|_{x = 1} = \frac{9}{\textstyle{{33} \over 4}} = \frac{12}{11}.
\end{equation}

By the application of the inequalities (\ref{eq5}) with (\ref{eq46}) we get the required result.
\end{proof}

\begin{proposition} We have
\[
D_{\Psi d} (P\vert \vert Q) \le \frac{11}{8}D_{\Psi T} (P\vert \vert Q).
\]
\end{proposition}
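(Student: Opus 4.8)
The plan is to follow the same template used for Propositions 4.1--4.7, that is, to invoke Lemma 1.2 with the two generating functions attached to $D_{\Psi d}$ and $D_{\Psi T}$. Writing $D_{\Psi d}(P\vert\vert Q)=\sum_{i=1}^n q_i f_{\Psi d}(p_i/q_i)$ and $D_{\Psi T}(P\vert\vert Q)=\sum_{i=1}^n q_i f_{\Psi T}(p_i/q_i)$, where $f_{\Psi d}(x)=\frac{1}{16}f_\Psi(x)-4f_d(x)$ and $f_{\Psi T}(x)=\frac{1}{16}f_\Psi(x)-f_T(x)$, both maps are normalized at $x=1$ and, by Lemma 3.1 and Property 2.1, convex on $(0,\infty)$. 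Using (\ref{eq10})--(\ref{eq11}) one gets $f''_{\Psi d}(x)=\psi''_2(x)-\psi''_{1/2}(x)$ and $f''_{\Psi T}(x)=\psi''_2(x)-\psi''_1(x)$; in particular
\[
f''_{\Psi T}(x)=\frac{x^3+1}{8x^3}-\frac{x^2+1}{4x^2(x+1)}=\frac{(x-1)^2(x^2+x+1)}{8x^3(x+1)}>0,\qquad x\ne1 .
\]
Cancelling the common factor $8x^3(x+1)$, the quotient to be analyzed is
\[
g_{\Psi d\_\Psi T}(x)=\frac{f''_{\Psi d}(x)}{f''_{\Psi T}(x)}=\frac{(x^3+1)(x+1)\sqrt{2x+2}-4x^{3/2}(x^{3/2}+1)}{(x-1)^2(x^2+x+1)\sqrt{2x+2}},\qquad x\ne1 .
\]

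Next I would differentiate $g_{\Psi d\_\Psi T}$ and, exactly as in (\ref{eq44}), reduce $g'_{\Psi d\_\Psi T}$ to the shape
\[
g'_{\Psi d\_\Psi T}(x)=-\,\bigl[\text{a manifestly positive factor}\bigr]\cdot(\sqrt x-1)\cdot k_6(x),
\]
where $k_6(x)$ is the residual factor left after collecting the $\sqrt{2x+2}$ terms. Granting that $k_6(x)\ge0$ for all $x>0$ --- which I would settle in the same way the paper settles the nonnegativity of $k_1,\dots,k_5$, namely from its graph, and which is moreover consistent with $g_{\Psi d\_\Psi T}(x)\to1$ both as $x\to0^+$ and as $x\to\infty$ --- one concludes that $g'_{\Psi d\_\Psi T}>0$ on $(0,1)$ and $g'_{\Psi d\_\Psi T}<0$ on $(1,\infty)$, so $g_{\Psi d\_\Psi T}$ attains its maximum over $(0,\infty)$ at $x=1$.

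Since $f''_{\Psi d}$ and $f''_{\Psi T}$ each have a double zero at $x=1$, the value $g_{\Psi d\_\Psi T}(1)$ is indeterminate, and applying l'Hospital's rule twice gives
\[
g_{\Psi d\_\Psi T}(1)=\left.\frac{(f''_{\Psi d}(x))''}{(f''_{\Psi T}(x))''}\right|_{x=1}=\frac{33/4}{6}=\frac{11}{8},
\]
the numerator $33/4=(f''_{\Psi d})''(1)$ being the very quantity that already appears in (\ref{eq46}). Hence $g_{\Psi d\_\Psi T}(x)\le\frac{11}{8}$ for all $x>0$, and Lemma 1.2 with $M=\frac{11}{8}$ yields $C_{f_{\Psi d}}(P\vert\vert Q)\le\frac{11}{8}\,C_{f_{\Psi T}}(P\vert\vert Q)$, which is the asserted inequality $D_{\Psi d}(P\vert\vert Q)\le\frac{11}{8}D_{\Psi T}(P\vert\vert Q)$.

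The only genuinely nontrivial step --- the main obstacle --- is the sign of $k_6$. The paper disposes of the analogous claims by plotting; a self-contained verification would put $x=y^2$, isolate the single remaining radical $\sqrt{2y^2+2}$, square, factor out the resulting double root at $y=1$, and show that the complementary polynomial has nonnegative coefficients (if necessary after a further symmetrizing substitution). Everything else --- the differentiation, reading off the sign pattern of $g'_{\Psi d\_\Psi T}$, the boundary evaluation $g_{\Psi d\_\Psi T}(1)=\frac{11}{8}$, and the closing appeal to Lemma 1.2 --- is mechanical and proceeds exactly as in Propositions 4.1--4.7.
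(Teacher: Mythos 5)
Your proposal follows essentially the same route as the paper's own proof of this proposition: form the ratio $g_{\Psi d\_\Psi T}=f''_{\Psi d}/f''_{\Psi T}$ (your simplified expression agrees with the paper's after cancelling the common factor $2(x+1)$), show it peaks at $x=1$ via the sign of its derivative with the nonnegativity of the residual factor $k_6$ taken from a graph, evaluate the limit by two applications of l'Hospital, and invoke Lemma 1.2. Incidentally, your arithmetic $\tfrac{33/4}{6}=\tfrac{11}{8}$ is the correct value matching the stated constant, whereas the paper's display (\ref{eq49}) misprints it as $\tfrac{11}{4}$.
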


\begin{proof} Let us consider
\[
g_{\Psi d\_\Psi T} (x) = \frac{{f}''_{\Psi d} (x)}{{f}''_{\Psi T} (x)} =
\frac{(x + 1)\left[ {(x^3 + 1)(2x + 2)^{3 / 2} - 8x^{3 / 2}(x^{3 / 2} + 1)}
\right]}{(2x + 2)^{3 / 2}(x - 1)^2(x^2 + x + 1)} , \quad x \ne 1
\]

\noindent
for all $x \in (0,\infty )$, where ${f}''_{\Psi d} (x) = {\psi }''_2 (x) -
{\psi }''_{1 / 2} (x)$ and ${f}''_{\Psi T} (x) = {\psi }''_2 (x) - {\psi
}''_0 (x)$. Calculating the first order derivative of the function $g_{\Psi
d\_\Psi T} (x)$ with respect to $x$, one gets
\begin{equation}
\label{eq47}
{g}'_{\Psi d\_\Psi T} (x) = - \frac{2x^{5 / 2}\left( {\sqrt x - 1}
\right)\left( {\sqrt x + 1} \right)\left( {x + 1} \right)}{x^{5 / 2}(x -
1)^4(x^2 + x + 1)^2(2x + 2)^{5 / 2}}\times k_6 (x),
\end{equation}

\noindent
where $k_6 (x)$, $x > 0$ is given by
\begin{align}
k_6 (x) = (2x + 2)^{5 / 2}(x^4 + 4x^2 + 1) &- 4x^2(3x^4 + 4x^3 + 4x^2 + 7x + 6)\notag\\
& - \,4\sqrt x (3 + 4x + 4x^2 + 7x^3 + 6x^4).\notag
\end{align}

The graph of $k_6 (x)$, $x > 0$ is given by
\begin{figure}[htbp]
\centerline{\includegraphics[width=2.00in,height=2.00in]{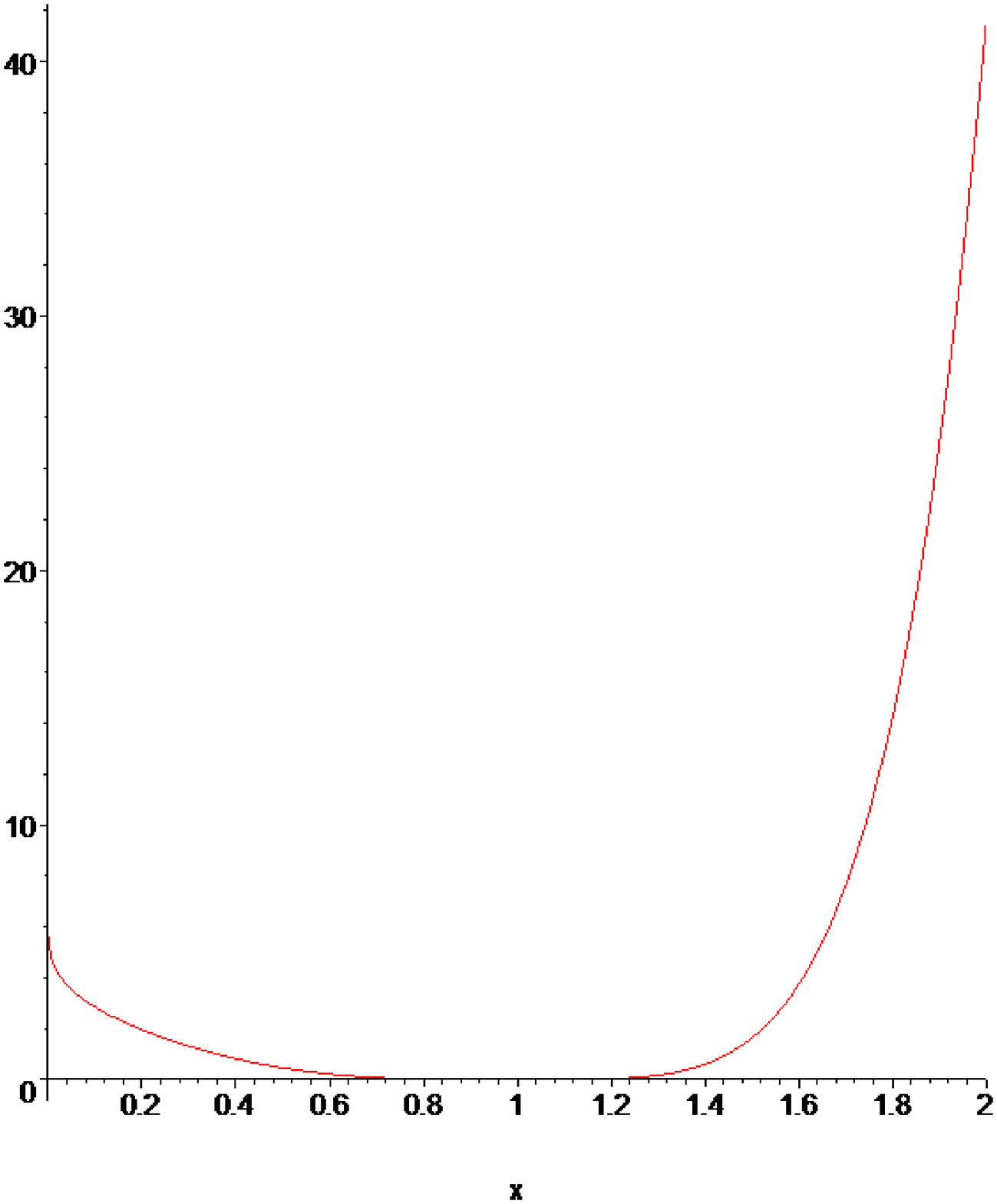}}
\label{fig9}
\end{figure}

We observe from the above graph that the function $k_6 (x) \ge 0$, $\forall x > 0$. Thus from the from the expression (\ref{eq47}), we conclude that
\begin{equation}
\label{eq48}
{g}'_{\Psi d\_\Psi T} (x) = \begin{cases}
 { > 0,} & {x < 1} \\
 { < 0,} & {x > 1} \\
\end{cases}
\end{equation}

Let us calculate now $g_{\Psi d\_\Psi T} (1)$. We observe that
\[
\left. {g_{\Psi d\_\Psi T} (x)} \right|_{x = 1} = \left. {\frac{{f}''_{\Psi
d} (x)}{{f}''_{\Psi T} (x)}} \right|_{x = 1} = \left. {\frac{\left(
{{f}''_{\Psi d} (x)} \right)^\prime }{\left( {{f}''_{\Psi T} (x)}
\right)^\prime }} \right|_{x = 1} = \mbox{indermination}.
\]

Calculating the second order derivatives of numerator and denominator of the function $g_{\Psi d\_\Psi T} (x)$, we have
\begin{equation}
\label{eq49}
g_{\Psi d\_\Psi T} (1) = \left. {\frac{\left( {{f}''_{\Psi d} (x)}
\right)^{\prime \prime }}{\left( {{f}''_{\Psi T} (x)} \right)^{\prime \prime
}}} \right|_{x = 1} = \frac{\textstyle{{33} \over 4}}{6} = \frac{11}{4}.
\end{equation}

By the application of the inequalities (\ref{eq5}) with (\ref{eq49}) we get the required result.
\end{proof}

\textbf{Proof of the Theorem 4.1.} We know that
\begin{equation}
\label{eq50}
D_{\Psi \Delta } (P\vert \vert Q) \le \frac{4}{3}D_{\Psi h} (P\vert \vert
Q).
\end{equation}

Propositions 4.1-4.8 together the inequality (\ref{eq50}) completes the proof of
the theorem.

\bigskip
From the sequence of inequalities given in theorem 4.1, we noted the absence
of the measure $D_{Jd} (P\vert \vert Q)$. Here below is an inequality
relating the measures $D_{Jd} (P\vert \vert Q)$ and $D_{Td} (P\vert \vert
Q)$.

\begin{proposition} We have
\[
\frac{1}{9}D_{Td} (P\vert \vert Q) \le D_{Jd} (P\vert \vert Q).
\]
\end{proposition}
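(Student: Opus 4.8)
The plan is to follow verbatim the template of Propositions 4.1--4.8: transfer the claimed divergence inequality to a pointwise comparison of the two generating kernels and then invoke Lemma 1.2. First I would write $D_{Td}(P\vert \vert Q)=\sum_{i}q_i f_{Td}(p_i/q_i)$ and $D_{Jd}(P\vert \vert Q)=\sum_{i}q_i f_{Jd}(p_i/q_i)$ with $f_{Td}=f_T-4f_d$ and $f_{Jd}=\tfrac18 f_J-4f_d$, whose second derivatives have already been computed in \eqref{eq13} and \eqref{eq14}. Both are of the form $\bigl(x^2(x+1)\sqrt{2x+2}\,\bigr)^{-1}$ times a surd polynomial, so the relevant ratio collapses to
\[
g_{Td\_Jd}(x)=\frac{{f}''_{Td}(x)}{{f}''_{Jd}(x)}=\frac{m_2(x)}{m_3(x)},\qquad x\neq 1,
\]
where $m_2,m_3$ are exactly the functions displayed right after \eqref{eq13} and \eqref{eq14}. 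By the convexity of $D_{Jd}$ established in Lemma 3.1 we have ${f}''_{Jd}(x)>0$ on $(0,\infty)$, so $g_{Td\_Jd}$ is well defined and extends continuously through $x=1$.

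The second step is to locate the extrema of $g_{Td\_Jd}$. Differentiating, the numerator $m_2'(x)m_3(x)-m_2(x)m_3'(x)$ should factor --- as in every earlier proposition --- as $-(\sqrt x-1)$ times a manifestly positive rational/surd prefactor times an auxiliary function, say $k_7(x)$, whose nonnegativity on $(0,\infty)$ is read off from its graph. This forces ${g}'_{Td\_Jd}(x)>0$ for $x<1$ and ${g}'_{Td\_Jd}(x)<0$ for $x>1$, so $g_{Td\_Jd}$ has a global maximum at $x=1$ and approaches its infimum at the two ends of $(0,\infty)$.

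Next I would evaluate the critical value and the endpoints. Since $m_2(1)=m_3(1)=0$ and (a short computation shows) $m_2'(1)=m_3'(1)=0$, calculating the second-order derivatives of numerator and denominator gives $g_{Td\_Jd}(1)=m_2''(1)/m_3''(1)=\tfrac{9/16}{1/16}=9$, equivalently $g_{Td\_Jd}(1)=\bigl({f}''_{Td}\bigr)''(1)\big/\bigl({f}''_{Jd}\bigr)''(1)$. A leading-order inspection of $m_2,m_3$ as $x\to0^+$ and as $x\to\infty$ gives in both cases the value $2$. Hence $m=2\le g_{Td\_Jd}(x)\le 9=M$ for all $x\in(0,\infty)$, and applying the inequality \eqref{eq5} of Lemma 1.2 with these constants yields $2\,D_{Jd}(P\vert \vert Q)\le D_{Td}(P\vert \vert Q)\le 9\,D_{Jd}(P\vert \vert Q)$; the right-hand half is precisely the assertion $\tfrac19 D_{Td}(P\vert \vert Q)\le D_{Jd}(P\vert \vert Q)$, and the left-hand half comes for free.

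The main obstacle, as throughout Section 4, is the sign of the auxiliary factor $k_7(x)$ that appears after differentiation: in the paper's style this is certified by a plot, but an honest analytic argument would amount to comparing fractional powers of $(x+1)$ against expressions such as $x^2+\sqrt x$ and $x^{3/2}+1$, which is where the real difficulty lies. Everything else --- assembling $g_{Td\_Jd}$ from \eqref{eq13}--\eqref{eq14}, the l'H\^{o}pital evaluation at $x=1$, and the two endpoint limits --- is routine.
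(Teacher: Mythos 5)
Your proposal follows exactly the paper's own argument: form $g_{Td\_Jd}=f''_{Td}/f''_{Jd}$, show via the sign of the auxiliary factor $k_7$ that it peaks at $x=1$ with value $9$ (your l'H\^{o}pital computation $m_2''(1)/m_3''(1)=\tfrac{9/16}{1/16}=9$ agrees with the paper's equivalent $\tfrac{9/8}{1/8}$), and apply Lemma 1.2 with $M=9$; the only additions are the correct but unneeded endpoint limits giving the extra bound $2D_{Jd}\le D_{Td}$. Like the paper, you leave the nonnegativity of $k_7$ to a graph, so the proposal is correct in the same sense and to the same standard of rigor as the original.
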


\begin{proof} Let us consider
\[
g_{Td\_Jd} (x) = \frac{{f}''_{Td} (x)}{{f}''_{Jd} (x)} = \frac{2\left[
{4\sqrt x \left( {x + 1} \right)\left( {x^{3 / 2} + 1} \right) - \left( {x^2
+ 1} \right)\left( {2x + 2} \right)^{3 / 2}} \right]}{\left( {x + 1}
\right)\left[ {8\sqrt x \left( {x^{3 / 2} + 1} \right) - \left( {x + 1}
\right)\left( {2x + 2} \right)^{3 / 2}} \right]} , \quad
x \ne 1
\]

\noindent
for all $x \in (0,\infty )$, where ${f}''_{Td} (x)$ and ${f}''_{Jd} (x)$.
Calculating the first order derivative of the function $g_{Td\_Jd} (x)$ with
respect to $x$, one gets
\begin{equation}
\label{eq51}
{g}'_{Td\_Jd} (x) = - \frac{4\left( {\sqrt x - 1} \right)\left( {\sqrt x +
1} \right)\left( {x + 1} \right)\sqrt {2x + 2} }{\sqrt x \left( {x + 1}
\right)^2\left[ {8\sqrt x \left( {x^{3 / 2} + 1} \right) - \left( {x + 1}
\right)\left( {2x + 2} \right)^{3 / 2}} \right]^2}\times k_7 (x),
\end{equation}

\noindent
where $k_7 (x)$, $x > 0$ is given by
\[
k_7 (x) = 2\left( {x^{7 / 2} + 3x^{5 / 2} + 4x^2 + 4x^{3 / 2} + 3x + 1}
\right) - \sqrt x \left( {2x + 2} \right)^{5 / 2}.
\]

The graph of $k_7 (x)$, $x > 0$ is given by

\begin{figure}[htbp]
\centerline{\includegraphics[width=2.00in,height=2.00in]{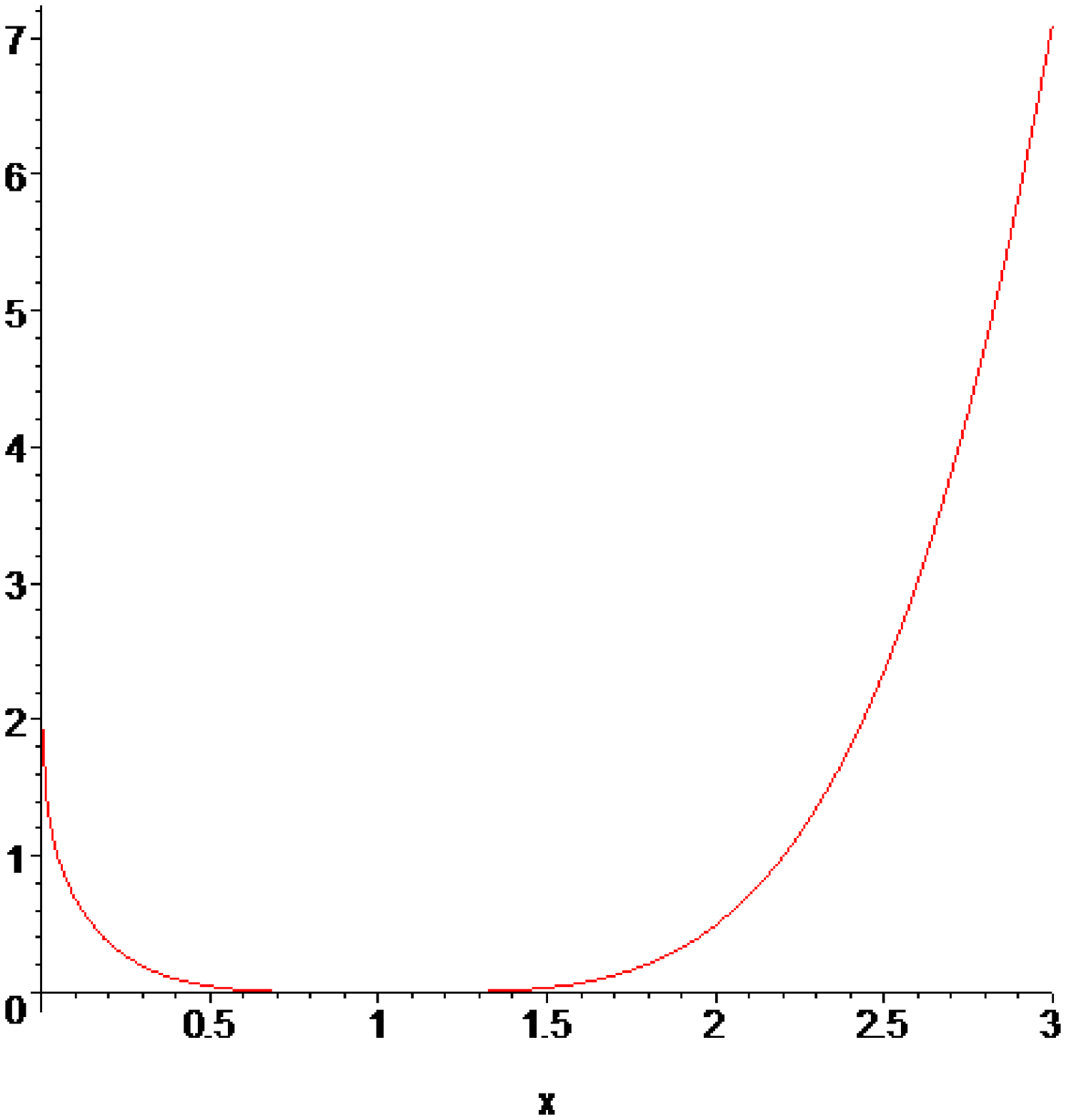}}
\label{fig10}
\end{figure}

We observe from the above graph that the function $k_7 (x) \ge 0$, $\forall
x > 0$. Thus from the from the expression (\ref{eq51}), we conclude that
\begin{equation}
\label{eq52}
{g}'_{Td\_Jd} (x) = \begin{cases}
 { > 0,} & {x < 1} \\
 { < 0,} & {x > 1} \\
\end{cases}
\end{equation}

Let us calculate now $g_{Td\_Jd} (1)$. We observe that
\[
\left. {g_{Td\_Jd} (x)} \right|_{x = 1} = \left. {\frac{{f}''_{Td}
(x)}{{f}''_{Jd} (x)}} \right|_{x = 1} = \left. {\frac{\left( {{f}''_{Td}
(x)} \right)^\prime }{\left( {{f}''_{Jd} (x)} \right)^\prime }} \right|_{x =
1} = \mbox{indermination}.
\]

Calculating the second order derivatives of numerator and denominator of the
function $g_{Td\_Jd} (x)$, we have
\begin{equation}
\label{eq53}
g_{Td\_Jd} (1) = \left. {\frac{\left( {{f}''_{Jd} (x)} \right)^{\prime
\prime }}{\left( {{f}''_{Td} (x)} \right)^{\prime \prime }}} \right|_{x = 1}
= \frac{\textstyle{9 \over 8}}{\textstyle{1 \over 8}} = 9.
\end{equation}

By the application of the inequalities (\ref{eq5}) with (\ref{eq53}) we get the required result.
\end{proof}

\section{Final Remarks}

\begin{remark} In view of (\ref{eq23}) and Proposition 4.10, we have
\begin{align}
\label{eq54}
D_{h\Delta } (P\vert \vert Q) \le & \frac{4}{5}D_{d\Delta } (P\vert \vert Q)
\le 4D_{dh} (P\vert \vert Q) \le \frac{12}{7}D_{dI} (P\vert \vert Q) \le\notag\\
& \le 3D_{hI} (P\vert \vert Q) \le D_{Th} (P\vert \vert Q) \le
\frac{4}{3}D_{Td} (P\vert \vert Q) \le 12\,D_{Jd} (P\vert \vert Q).
\end{align}

We can easily find examples where the measure $D_{Jd} (P\vert \vert Q)$ don't have relations with the other measures appearing the rest of sequence given in (\ref{eq23}), such as $D_{\Psi \Delta } (P\vert \vert Q)$, etc.
\end{remark}

\begin{remark} Following the similar lines of above propositions, we can also prove the following inequality for the measure $D_{Jd} (P\vert \vert Q)$
\begin{equation}
\label{eq55}
\frac{1}{4}D_{Jh} (P\vert \vert Q) \le D_{Jd} (P\vert \vert Q).
\end{equation}
\end{remark}

\begin{remark} As a consequence of Propositions 4.1-4.9 and the expression (\ref{eq23}), we have the following inequalities:
\begin{itemize}
\item[(i)] \quad $\frac{16d+3I}{7} \le h\le \frac{64d+3\Delta }{20} $.
\item[(ii)] \quad $h\le \frac{T+3I}{4} $.
\item[(iii)] \quad $h\le \frac{\Psi +12\Delta }{64} $.
\item[(iv)] \quad $4d\le \frac{T+3h}{4} $.
\item[(v)] \quad $4d\le \frac{\Psi +176h}{192} $.
\item[(vi)] \quad $4d\le \frac{3J+8h}{8} $.
\item[(vii)] \quad $T\le \frac{3\Psi +512d}{176} $.
\item[(viii)] \quad $\frac{32d+T}{9} \le \frac{J}{8} $.
\item[(ix)] \quad $4T+\frac{3\Delta }{16} \le \frac{3\Psi }{64} +16d$.
\end{itemize}

In view of above above results we have the following inequalities:
\begin{align}
& I\le \frac{16d+3I}{7} \le h \le \frac{64d+\Delta }{20} \le 4d\le \frac{32d+T}{9} \le \frac{1}{8} J,  \notag\\
 &h\le \frac{\Psi +12\Delta }{64} \le 4d\le \frac{T+3h}{4} \le T\le 4d+\frac{3}{16} \left(\frac{1}{16} \Psi -\frac{1}{4} \Delta \right)\le \frac{3J+8h}{8} ,\notag\\
& 4d\le \frac{\Psi +176h}{192} \le \frac{1}{16} \Psi ,\notag
\intertext{and}
&T\le \frac{3\Psi +512d}{176} \le \frac{1}{16} \Psi . \notag
\end{align}
\end{remark}

\begin{remark} As a consequence of previous results (\ref{eq2}) and (\ref{eq3}), we have the following inequalities:
\begin{align}
&\frac{1}{4} \Delta \le I\le \frac{9h+\Delta }{12} \le h\le \frac{J+8I}{16} \le \frac{T+2h}{3} \le \frac{1}{8} J\le \frac{8T+\Delta }{12} \le T\le \frac{\Psi +6J}{642} \le \frac{1}{16} \Psi \notag\\
\intertext{and}
&\frac{1}{8} J\le \frac{\Psi +192h-4\Delta }{192}  \le \frac{\Psi +128h}{144} \le \frac{1}{16} \Psi .\notag \end{align}
\end{remark}

\end{document}